\documentclass[journal]{IEEEtran}
\usepackage{amsmath,amsfonts,amssymb,amsthm}
\usepackage{array}
\usepackage{textcomp}
\usepackage{stfloats}
\usepackage{url}
\usepackage{verbatim}
\usepackage{graphicx}
\usepackage{cite}

\usepackage{graphicx}
\usepackage{textcomp}
\usepackage{subfigure}

\usepackage{booktabs}
\usepackage[dvipsnames]{xcolor}
\usepackage{multirow}

\usepackage{threeparttable}
\usepackage[ruled,vlined]{algorithm2e}
\usepackage{balance}
\usepackage{lipsum}

\usepackage{csquotes}
\usepackage{makecell}
\usepackage[misc]{ifsym} 
\usepackage{url} 

\newtheorem{definition}{Definition}

\newtheorem{lemma}{Lemma}

\newtheorem{theorem}{Theorem}

\begin{document}

\title{PDET-LSH: Scalable In-Memory Indexing for High-Dimensional Approximate Nearest Neighbor Search with Quality Guarantees}

\author{Jiuqi Wei, Xiaodong Lee, Botao Peng, Quanqing Xu, Chuanhui Yang, Themis Palpanas
\thanks{This work is supported by EU Horizon projects TwinODIS ($101160009$), ARMADA
($101168951$), DataGEMS ($101188416$) and RECITALS ($101168490$), and by
$Y \Pi AI \Theta A$ \& NextGenerationEU project HARSH ($Y\Pi
3TA-0560901$) that is carried out within the framework of the National
Recovery and Resilience Plan “Greece 2.0” with funding from the European
Union – NextGenerationEU. 
\emph{(Corresponding author: Chuanhui Yang.)}}
\thanks{Jiuqi Wei, Quanqing Xu, and Chuanhui Yang are with Oceanbase, Ant Group, Beijing, China (e-mail: weijiuqi.wjq@antgroup.com; xuquanqing.xqq@oceanbase.com; rizhao.ych@oceanbase.com).}
\thanks{Xiaodong Lee and Botao Peng are with Institute of Computing Technology, Chinese Academy of Sciences, Beijing, China (e-mail: xl@ict.ac.cn; pengbotao@ict.ac.cn).}
\thanks{Themis Palpanas is with LIPADE, Universit{\'e} Paris Cit{\'e}, Paris, France (e-mail: themis@mi.parisdescartes.fr).}}


\markboth{Journal of \LaTeX\ Class Files,~Vol.~14, No.~8, August~2021}%
{Shell \MakeLowercase{\textit{et al.}}: A Sample Article Using IEEEtran.cls for IEEE Journals}


\maketitle

\begin{abstract}
Locality-sensitive hashing (LSH) is a well-known solution for approximate nearest neighbor (ANN) search with theoretical guarantees. 
Traditional LSH-based methods mainly focus on improving the efficiency and accuracy of query phase by designing different query strategies, but pay little attention to improving the efficiency of the indexing phase. 
They typically fine-tune existing data-oriented partitioning trees to index data points and support their query strategies.
However, their strategy to directly partition the multidimensional space is time-consuming, and performance degrades as the space dimensionality increases.
In this paper, we design an encoding-based tree called Dynamic Encoding Tree (DE-Tree) to improve the indexing efficiency and support efficient range queries. 
Based on DE-Tree, we propose a novel LSH scheme called DET-LSH. DET-LSH adopts a novel query strategy, 
which performs range queries in multiple independent index DE-Trees 
to reduce the probability of missing exact NN points. 
Extensive experiments demonstrate that while achieving best query accuracy, 
DET-LSH achieves up to 6x speedup in indexing time and 2x speedup in query time over the state-of-the-art LSH-based methods.
In addition, to further improve the performance of DET-LSH, we propose PDET-LSH, an in-memory method adopting the parallelization opportunities provided by multicore CPUs. 
PDET-LSH exhibits considerable advantages in indexing and query efficiency, especially on large-scale datasets.
Extensive experiments show that, while achieving the same query accuracy as DET-LSH, PDET-LSH offers up to 40x speedup in indexing time and 62x speedup in query answering time over the state-of-the-art LSH-based methods.
Our theoretical analysis demonstrates that DET-LSH and PDET-LSH offer probabilistic guarantees on query answering accuracy. 
\end{abstract}

\begin{IEEEkeywords}
Locality Sensitive Hashing, Approximate Nearest Neighbor Search, High-Dimensional Spaces.
\end{IEEEkeywords}

\section{Introduction}
\noindent \textbf{Background and Problem.} Nearest neighbor (NN) search in high-dimensional Euclidean spaces is a fundamental problem in various fields, such as database, information retrieval, data mining, and machine learning. 
Given a dataset $\mathcal D$ of $n$ data points in $d$-dimensional space $\mathbb{R}^d$ and a query $q$, an NN query returns a point $o^* \in \mathcal D$ which has the minimum Euclidean distance to $q$ among all points in $\mathcal D$. 
However, NN search in high-dimensional datasets is challenging due to the \enquote{curse of dimensionality} phenomenon~\cite{weber1998quantitative}. 
In practice, Approximate Nearest Neighbor (ANN) search is often used as an alternative~\cite{hydra2}, 
sacrificing some query accuracy to achieve a huge improvement in efficiency~\cite{DBLP:journals/debu/00070P023,suco,iliassigmod25,leafi}. 
Given an approximation ratio $c$ and a query $q \in \mathbb{R}^d$, a $c$-ANN query returns a point $o$ whose distance to $q$ is at most $c$ times the distance between $q$ and its exact NN $o^*$, i.e., $\left\|q,o\right\| \leq c \cdot \left\|q,o^*\right\|$.

\textbf{Prior Work.} Locality-sensitive hashing (LSH)-based methods are known for their robust theoretical guarantees on the accuracy of query results~\cite{dblsh,lccslsh,pmlsh,detlsh}.
LSH is particularly suitable for scenarios where worst-case guarantees dominate average-case performance, as it provides distribution-independent sublinear query guarantees with provable correctness bounds~\cite{indyk1998approximate,andoni2008near}.
Moreover, unlike other ANN methods that rely on stable data distributions or trained routing structures~\cite{dong2011efficient,kraska2018case}, LSH remains applicable under severe distribution shift, since its guarantees depend solely on the underlying distance metric~\cite{andoni2015practical}, and its efficient random projections and lightweight index construction further enable rapid deployment and immediate query response in latency-sensitive applications such as LLM inference acceleration~\cite{chenmagicpig}.
LSH-based methods employ LSH functions to map high-dimensional points into lower-dimensional spaces for efficient indexing and querying. 
Due to LSH properties, nearby points in the original space have a higher probability of remaining close in the projected space~\cite{gionis1999similarity}. 
This allows for high-quality results by examining only the neighbors of the query in the projected space~\cite{datar2004locality}.
Following prior works, we could group LSH-based methods into three categories: 
1) boundary constraint (BC) based methods~\cite{e2lsh,lsbforest,sklsh,dblsh}; 
2) collision counting (C2) based methods ~\cite{c2lsh,qalsh,r2lsh,vhp,lccslsh}; 
and 3) distance metric (DM) based methods~\cite{srs,pmlsh}. 
BC methods map points into $L$ independent $K$-dimensional projected spaces, assigning each to a hash bucket bounded by a $K$-dimensional hypercube. 
Two points \emph{collide} if they land in the same bucket in any hash table.
Unlike BC, which requires collisions in all $K$ dimensions, C2 methods relax this condition, selecting points whose collision count with the query exceeds a threshold.
DM methods use the projected space distance to estimate the original distance with theoretical guarantees, selecting candidates via Euclidean range queries in that space.

\textbf{Limitations and Motivation.} 
Nowadays, new data is produced at an ever-increasing rate, and the size of datasets is continuously growing~\cite{wei2023data,wei2025dominate}. We need to manage large-scale data more efficiently to support further data analysis~\cite{hydra2}.
However, existing LSH-based methods mainly focus on reducing query time and improving query accuracy by designing different query strategies, but pay little attention to reducing indexing time. 
They typically fine-tune existing data-oriented partitioning trees to index points and support their query strategies, such as R*-Tree~\cite{rstartree} for DB-LSH~\cite{dblsh}, PM-Tree~\cite{pmtree} for PM-LSH~\cite{pmlsh}, and R-Tree~\cite{rtree} for SRS~\cite{srs}.
Data-oriented partitioning trees organize points hierarchically into bounding shapes (e.g., hyperrectangles)~\cite{rtree,mtree,rstartree,pmtree}, but partitioning in high-dimensional space is computationally expensive and becomes less effective as dimensionality increases~\cite{weber1998quantitative}. 
These limitations constrain the efficiency and dimensionality of the projected space, highlighting the need for a more efficient tree structure.
From another perspective, building a more efficient tree structure boosts query accuracy by enabling more trees to be constructed within the same indexing time. 
For instance, DB-LSH~\cite{dblsh} improves result accuracy by using multiple R*-Trees to minimize missing true nearest neighbors. In addition, efficient index construction and query processing are essential for large-scale ANN search.
Many approaches~\cite{messi,paris+,sing,echihabi2022hercules,dumpyos,azizi2023elpis,chatzakis2023odyssey,fatourou2023fresh} take advantage of the parallelization opportunities (i.e., SIMD instructions, multi-socket, and multi-core architectures) provided by modern hardware to accelerate indexing and querying. 
However, none of the state-of-the-art LSH-based methods~\cite{pmlsh,dblsh,lccslsh} offer parallel algorithms to improve the efficiency of indexing and querying, which limits their competitiveness and suitability for demanding modern applications.

\textbf{Our Method.} 
In this paper, we propose a novel tree structure called Dynamic Encoding Tree (DE-Tree), a novel LSH scheme called DET-LSH, and a parallel version of DET-LSH called PDET-LSH to address the high-dimensional $c$-ANN search problem more efficiently and accurately than current methods.
First, we present DE-Tree, an encoding-based tree that independently divides and encodes each dimension of the projected space according to the data distribution.
Unlike data-oriented trees, it avoids direct multi-dimensional partitioning, improving indexing efficiency. 
Its dynamic encoding ensures nearby points have similar encoding representations, enhancing query accuracy. 
DE-Tree also supports efficient range queries because the upper and lower bound distances between a query point and any DE-Tree node can be easily calculated.
Second, we propose DET-LSH, a novel LSH scheme that dynamically encodes $K$-dimensional projected points and builds $L$ DE-Trees from the encoded data. 
DET-LSH employs a two-step query strategy combining BC and DM principles: first, coarse-grained range queries in DE-Trees retrieve a candidate set; second, fine-grained exact distance calculations are performed to sort and return the final results. 
This approach uses coarse filtering to boost efficiency and fine-grained verification to ensure accuracy.
Third, we conduct a rigorous theoretical analysis showing that DET-LSH can correctly answer a $c^2$-$k$-ANN query with a constant probability. 
Fourth, we redesign the encoding, indexing, and query algorithms of DET-LSH, and propose a new method called PDET-LSH, which takes advantage of the parallelization opportunities provided by multicore CPUs to accelerate indexing and querying.
Furthermore, extensive experiments on various real-world datasets show that DET-LSH and PDET-LSH outperform existing LSH-based methods in both efficiency and accuracy.

Our main contributions are summarized as follows\footnote{A preliminary version of this paper has appeared elsewhere~\cite{detlsh}.}.

\begin{itemize}
	\item We present a novel encoding-based tree structure called DE-Tree. Compared with data-oriented partitioning trees used in existing LSH-based methods, DE-Tree has better indexing efficiency and can support more efficient range queries based on the Euclidean distance metric.
	\item We propose DET-LSH, a novel LSH scheme based on DE-Tree, and a novel query strategy that takes into account both efficiency and accuracy. We provide a theoretical analysis showing that DET-LSH answers a $c^2$-$k$-ANN query with a constant success probability.
    \item We propose PDET-LSH, which takes advantage of the parallelization opportunities provided by multicore CPUs to accelerate indexing and querying. To the best of our knowledge, PDET-LSH is the first parallel solution among the state-of-the-art LSH-based solutions. Our design of PDET-LSH could instigate similar work on other LSH-based solutions. 
	\item We conduct extensive experiments, demonstrating that DET-LSH and PDET-LSH can achieve better efficiency and accuracy than existing LSH-based methods. While achieving better query accuracy than competitors, DET-LSH achieves up to 6x speedup in indexing time and 2x speedup in query time over the state-of-the-art LSH-based methods. PDET-LSH achieves the same query accuracy as DET-LSH, and up to 40x speedup in indexing time and 62x speedup in query time over the state-of-the-art LSH-based methods.
\end{itemize}

\section{Preliminaries} \label{chapter3}

\subsection{Problem Definition}  \label{chapter3.1}

Let $\mathcal D$ be a dataset of points in $d$-dimensional space $\mathbb{R}^d$. 
The dataset cardinality is denoted as $\lvert \mathcal D \rvert=n$, 
and let $\left\|o_1,o_2\right\|$ denote the distance between points $o_1,o_2\in \mathcal D$. 
The query point $q \in \mathbb{R}^d$.

\begin{definition}[$c$-ANN]\label{def1}
	Given a query point $q$ and an approximation ratio $c > 1$, let $o^*$ be the exact nearest neighbor of $q$ in $\mathcal D$. A $c$-ANN query returns a point $o \in \mathcal D$ satisfying $\left\|q,o\right\| \leq c \cdot \left\|q,o^*\right\|$.
\end{definition} 

The $c$-ANN query can be generalized to $c$-$k$-ANN query that returns $k$ approximate nearest points, where $k$ is a positive integer.

\begin{definition}[$c$-$k$-ANN]\label{def2}
	Given a query point $q$, an approximation ratio $c > 1$, and an integer $k$. Let $o^*_i$ be the $i$-th exact nearest neighbor of $q$ in $\mathcal D$. A $c$-$k$-ANN query returns $k$ points $o_1,o_2,...,o_k$. For each $o_i \in D$ satisfying $\left\|q,o_i\right\| \leq c \cdot \left\|q,o^*_i\right\|$, where $i \in [1,k]$.
\end{definition}  

In fact, LSH-based methods do not solve $c$-ANN queries directly because $o^*$ and $\left\|q,o^*\right\|$ is not known in advance~\cite{lccslsh,pmlsh,dblsh}. 
Instead, they solve the problem of ($r$,$c$)-ANN proposed in~\cite{indyk1998approximate}.

\begin{definition}[($r$,$c$)-ANN]\label{def3}
	Given a query point $q$, an approximation ratio $c > 1$, and a search radius $r$. An ($r$,$c$)-ANN query returns the following result:
	\begin{enumerate}
		\item If there exists a point $o \in \mathcal D$ such that $\left\|q,o\right\| \leq r$, then return a point $o^{\prime} \in \mathcal D$ such that $\left\|q,o^{\prime}\right\| \leq c \cdot r$;
		\item If for all $o \in \mathcal D$ we have $\left\|q,o\right\| > c \cdot r$, then return nothing;
		\item If for the point $o$ closest to $q$ we have $r < \left\|q,o\right\| \leq c \cdot r$, then return $o$ or nothing. 
	\end{enumerate}
\end{definition}

\begin{table}
	\centering
	\caption{Notations}
	\label{table1}
 {
	\begin{tabular}{cc}
		\toprule
		\textbf{Notation} & \textbf{Description} \\
		\midrule
		$\mathbb{R}^d$ & $d$-dimensional Euclidean space \\
		$\mathcal D$ & Dataset of points in $\mathbb{R}^d$ \\
		$n$ & Dataset cardinality  $\lvert \mathcal D \rvert$ \\
		$o$ & A data point in $\mathcal D$ \\
		$q$ & A query point in $\mathbb{R}^d$ \\
		$o^{\prime},q^{\prime}$ & $o$ and $q$ in the projected space \\
		$o^*,o^*_i$ & The first and $i$-th nearest point in $\mathcal D$ to $q$ \\
		$\left\|o_1,o_2\right\|$ & The Euclidean distance between $o_1$ and $o_2$ \\
		$s,s^\prime$ & Abbreviation for $\left\|o_1,o_2\right\|$ and  $\left\|o_1^\prime,o_2^\prime \right\|$ \\
		$h(o)$ & Hash function \\
		$\mathcal H (o)$ & $[h_1(o),...,h_K(o)]$, the coordinates of $o^{\prime}$ \\
		$\mathcal H_i(o)$ & Coordinates of $o^{\prime}$ in the $i$-th project space\\
		$c$ & Approximation ratio \\
		$r$ & Search radius in the original space \\
		$r_{min}$ & The initial search radius \\
		$d$ & Dimension of the original space \\
		$K$ & Dimension of the projected space \\
		$L$ & Number of independent projected spaces \\
		$\beta$ & Maximum false positive percentage \\
        $N_r$ & Number of regions in each projected space\\
        $N_w$ & Number of parallel workers\\
        $N_q$ & Number of queues during query answering\\
        $S_p$ & Speedup ratio\\
		\bottomrule
	\end{tabular}
 } 
\end{table}

The $c$-ANN query can be transformed into a series of ($r$,$c$)-ANN queries with increasing radii until a point is returned. 
The search radius $r$ is continuously enlarged by multiplying $c$, i.e., $r=r_{min}, r_{min} \cdot c, r_{min} \cdot c^2,...$, where $r_{min}$ is the initial search radius. In this way, as proven by~\cite{indyk1998approximate}, the ANN query can be answered with an approximation ratio $c^2$, i.e., $c^2$-ANN.

\subsection{Locality-Sensitive Hashing}  \label{chapter3.2}

The capability of an LSH function $h$ is to project closer data points into the same hash bucket with a higher probability, i.e., $h(o_1)=h(o_2)$. Formally, the definition of LSH used in Euclidean space is given below~\cite{pmlsh,dblsh}: 

\begin{definition}[LSH]\label{def4}
	Given a distance $r$, an approximation ratio $c>1$, a family of hash functions $\mathcal H = \{h:\mathbb{R}^d \rightarrow \mathbb{R}\}$ is called ($r$,$cr$,$p_1$,$p_2$)-locality-sensitive, if for $\forall o_1,o_2 \in \mathbb{R}^d$, it satisfies both of the following conditions: 
	\begin{enumerate}
	   	\item If $\left\|o_1,o_2\right\| \leq r$, $\Pr{[h(o_1)=h(o_2)]} \geq p_1$;
	   	\item If $\left\|o_1,o_2\right\| > cr$, $\Pr{[h(o_1)=h(o_2)]} \leq p_2$,
	\end{enumerate}
	where $h \in \mathcal H$ is randomly chosen, and the probability values $p_1$ and $p_2$ satisfy $p_1>p_2$.
\end{definition} 

A widely adopted LSH family for the Euclidean space is defined as follows~\cite{qalsh}:

\begin{equation}  \label{eq1}
	h(o)=\vec{a} \cdot \vec{o},
\end{equation}

\noindent where $\vec{o}$ is the vector representation of a point $o \in \mathbb{R}^d$ and $\vec{a}$ is a $d$-dimensional vector where each entry is independently chosen from the standard normal distribution $\mathcal N(0,1)$.

\subsection{$p$-Stable Distribution and $\chi^2$ Distribution}  \label{chapter3.3}

A distribution $\mathcal T$ is called $p$-stable, if for any $u$ real numbers $v_1,...,v_u$ and identically distributed (i.i.d.) variables $X_1,...,X_u$ following $\mathcal T$ distribution, $\sum_{i=1}^{u}v_iX_i$ has the same distribution as $(\sum_{i=1}^{u}\lvert v_i \rvert ^p)^{1/p} \cdot X$, where $X$ is a random variable with distribution $\mathcal T$~\cite{datar2004locality}. $p$-stable distribution exists for any $p \in (0,2]$~\cite{zolotarev1986one}, and $\mathcal T$ is the normal distribution when $p=2$. 

Let $o^{\prime}= \mathcal H (o)=  [h_1(o),...,h_K(o)]$ denote the point $o$ in the $K$-dimensional projected space. For any two points $o_1,o_2 \in \mathcal D$, let $s=\left\|o_1,o_2\right\|$ and $s^\prime=\left\|o_1^\prime,o_2^\prime\right\|$ denote the Euclidean distances between $o_1$ and $o_2$ in the original space and in the projected space.

\begin{lemma}\label{lemma1}
	$\frac{s^{\prime2}}{s^2}$ follows the $\chi^2(K)$ distribution.
\end{lemma}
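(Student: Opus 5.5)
The plan is to compute the distribution of each projected coordinate difference directly and then sum the squares. For fixed $o_1,o_2$ with $s=\left\|o_1,o_2\right\|>0$, each coordinate of $o_1^\prime-o_2^\prime$ has the form $h_j(o_1)-h_j(o_2)=\vec{a}_j\cdot(\vec{o}_1-\vec{o}_2)$ by Eq.~(\ref{eq1}). Here $\vec{a}_j$ has i.i.d.\ $\mathcal N(0,1)$ entries, independently across $j=1,\dots,K$.

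\textbf{Step 1 (one coordinate).} Write $\vec{v}=\vec{o}_1-\vec{o}_2=(v_1,\dots,v_d)$. Then $\vec{a}_j\cdot\vec{v}=\sum_{i=1}^d v_i a_{j,i}$ with $a_{j,i}$ i.i.d.\ standard normal. The $2$-stability of the normal distribution recalled in Section~\ref{chapter3.3} applies with $p=2$. It shows that this sum has the same distribution as $(\sum_i |v_i|^2)^{1/2}X = sX$, where $X\sim\mathcal N(0,1)$. So $Z_j:=(h_j(o_1)-h_j(o_2))/s$ is standard normal.

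\textbf{Step 2 (independence).} Each $Z_j$ is a function of the vector $\vec{a}_j$ alone, and the vectors $\vec{a}_1,\dots,\vec{a}_K$ are drawn independently. Hence $Z_1,\dots,Z_K$ are i.i.d.\ $\mathcal N(0,1)$.

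\textbf{Step 3 (sum of squares).} We have $s^{\prime 2}=\sum_{j=1}^K (h_j(o_1)-h_j(o_2))^2$, so $s^{\prime2}/s^2=\sum_{j=1}^K Z_j^2$. By the definition of the $\chi^2$ distribution with $K$ degrees of freedom as the law of a sum of $K$ squared i.i.d.\ standard normals, this quantity follows $\chi^2(K)$.

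There is no real obstacle here. The only points needing care are two. First, the hash functions must be drawn independently across the $K$ coordinates; this is implicit in the construction of $\mathcal H$. Second, we need $s>0$ so the ratio is defined, and otherwise the statement is vacuous. The randomness is over the choice of the $\vec{a}_j$, with $o_1,o_2$ fixed.
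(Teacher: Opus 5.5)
Your proposal is correct and matches the paper's proof step for step. Both use $2$-stability to show that each normalized projected difference $(h_j(o_1)-h_j(o_2))/s$ is standard normal, and both then sum the $K$ squares to get $\chi^2(K)$. Your version also states explicitly two points the paper uses only implicitly: that the $K$ hash vectors are independent, and that $s>0$.
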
 

\begin{proof}
	Let $h^\prime=h(o_1)-h(o_2)=\vec{a} \cdot (\vec{o_1}-\vec{o_2})=\sum_{i=1}^{d}(o_1[i]-o_2[i]) \cdot a[i]$, where $a[i]$ follows the $\mathcal N(0,1)$ distribution. Since 2-stable distribution is the normal distribution, $h^\prime$ has the same distribution as $(\sum_{i=1}^{d} (o_1[i]-o_2[i])^2)^{1/2} \cdot X=s \cdot X$, where $X$ is a random variable with distribution $\mathcal N(0,1)$. Therefore $\frac{h^\prime}{s}$ follows the $\mathcal N(0,1)$ distribution. Given $K$ hash functions $h_1(\cdot),...,h_K(\cdot)$, we have $\frac{h_1^{\prime2}+...+h_K^{\prime2}}{s^2}=\frac{s^{\prime2}}{s^2}$, which has the same distribution as $\sum_{i=1}^{K}X_i^2$. Thus, $\frac{s^{\prime2}}{s^2}$ follows the $\chi^2(K)$ distribution.
\end{proof}

\begin{lemma}\label{lemma2}
	Given $s$ and $s^\prime$ we have:
	\begin{equation}  \label{eqkafang}
		\Pr{[s^\prime>s\sqrt{\chi^2_{\alpha}(K)}]} = \alpha,
	\end{equation}
	\noindent where $\chi^2_\alpha(K)$ is the upper quantile of a distribution $Y \sim \chi^2(K)$, i.e., $\Pr{[Y > \chi^2_\alpha(K)]} = \alpha$.
\end{lemma}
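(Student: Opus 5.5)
The plan is to reduce the statement directly to Lemma~\ref{lemma1}. Since $s>0$ is a fixed quantity (the original-space distance between $o_1$ and $o_2$, assumed distinct) and $s^\prime\geq 0$ is the random projected distance, the event $s^\prime > s\sqrt{\chi^2_\alpha(K)}$ is an event about $Y:=s^{\prime2}/s^2$. By Lemma~\ref{lemma1}, $Y$ follows the $\chi^2(K)$ distribution.

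The key steps are as follows. First, note that both sides of the inequality $s^\prime > s\sqrt{\chi^2_\alpha(K)}$ are nonnegative; $\chi^2_\alpha(K)\ge 0$ because the $\chi^2(K)$ distribution is supported on $[0,\infty)$. Since squaring is strictly increasing on $[0,\infty)$, the event is identical to
\[
s^{\prime 2} > s^2\,\chi^2_\alpha(K),
\]
and dividing by $s^2>0$ turns it into $Y>\chi^2_\alpha(K)$. Second, apply Lemma~\ref{lemma1}: $\Pr[Y>\chi^2_\alpha(K)]$ is the upper-tail probability of a $\chi^2(K)$ variable at $\chi^2_\alpha(K)$. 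By the definition of the upper quantile given in the statement, this equals $\alpha$. Equation~(\ref{eqkafang}) follows.

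The only point needing care is the well-definedness of the quantile. The $\chi^2(K)$ distribution is continuous with a strictly increasing CDF on $(0,\infty)$, so for every $\alpha\in(0,1)$ there is a unique $\chi^2_\alpha(K)$ with $\Pr[Y>\chi^2_\alpha(K)]=\alpha$ exactly, not merely with an inequality. I would state this explicitly.

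The degenerate case $s=0$ (that is, $o_1=o_2$) should be excluded or treated separately. In that case $s^\prime=0$ almost surely and the probability is $0$. I expect no real obstacle here; the argument is a monotone change of variables followed by Lemma~\ref{lemma1}.
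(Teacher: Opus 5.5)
Your proposal is correct and follows the paper's own proof: invoke Lemma~\ref{lemma1} to get $s^{\prime2}/s^2\sim\chi^2(K)$, read off $\Pr[s^{\prime2}/s^2>\chi^2_\alpha(K)]=\alpha$ from the definition of the upper quantile, and rewrite the event as $s^\prime>s\sqrt{\chi^2_\alpha(K)}$. Your added care is a sound tightening of points the paper leaves implicit: requiring $s>0$ (the lemma fails for $s=0$, where the probability is $0$), using nonnegativity so that squaring preserves the event, and noting that continuity of $\chi^2(K)$ makes the equality exact.
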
 

\begin{proof}
	From Lemma~\ref{lemma1}, we have 	$\frac{s^{\prime2}}{s^2} \sim \chi^2(K)$. Since $\chi^2_{\alpha}(K)$ is the $\alpha$ upper quantiles of $\chi^2(K)$ distribution, we have $\Pr{[ \frac{s^{\prime2}}{s^2} > \chi^2_{\alpha}(K)]} = \alpha$. Transform the formulas, we have $\Pr{[s^\prime>s\sqrt{\chi^2_{\alpha}(K)}]} = \alpha$.
\end{proof}

\begin{figure*} [tb]
\vspace*{-0.4cm}
	\flushleft 
	\subfigcapskip=5pt
	\subfigure[Encoding phase and indexing phase.]{
		\includegraphics[width=0.47\linewidth]{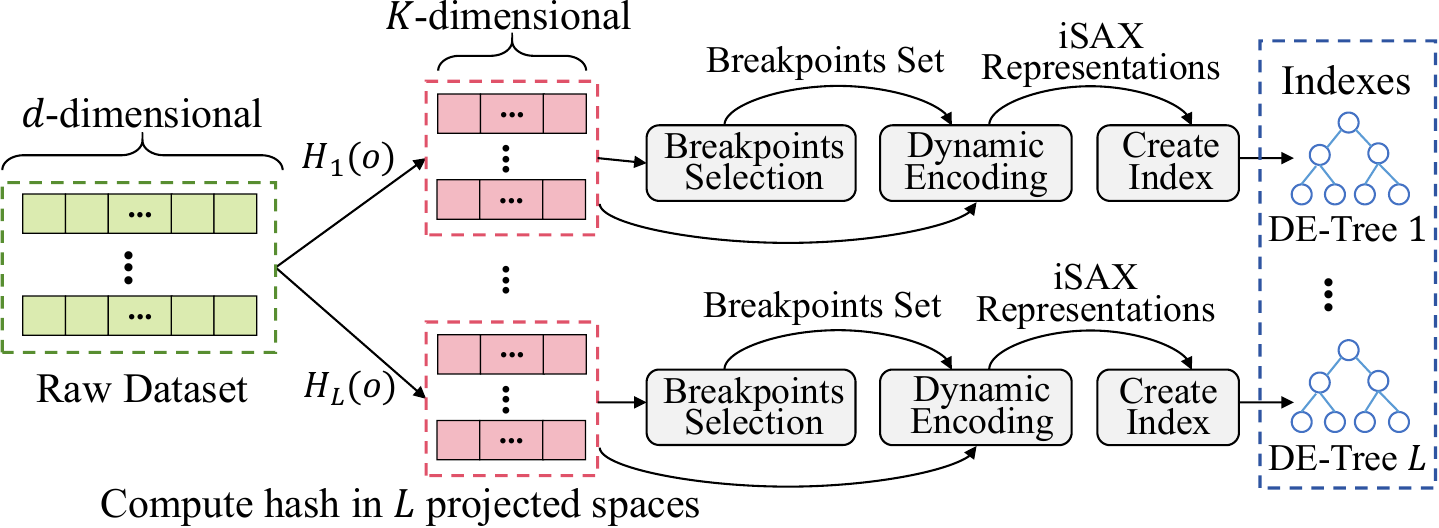}
		\label{overview1}}\hspace{3mm}
	\subfigure[Query phase.]{
		\includegraphics[width=0.47\linewidth]{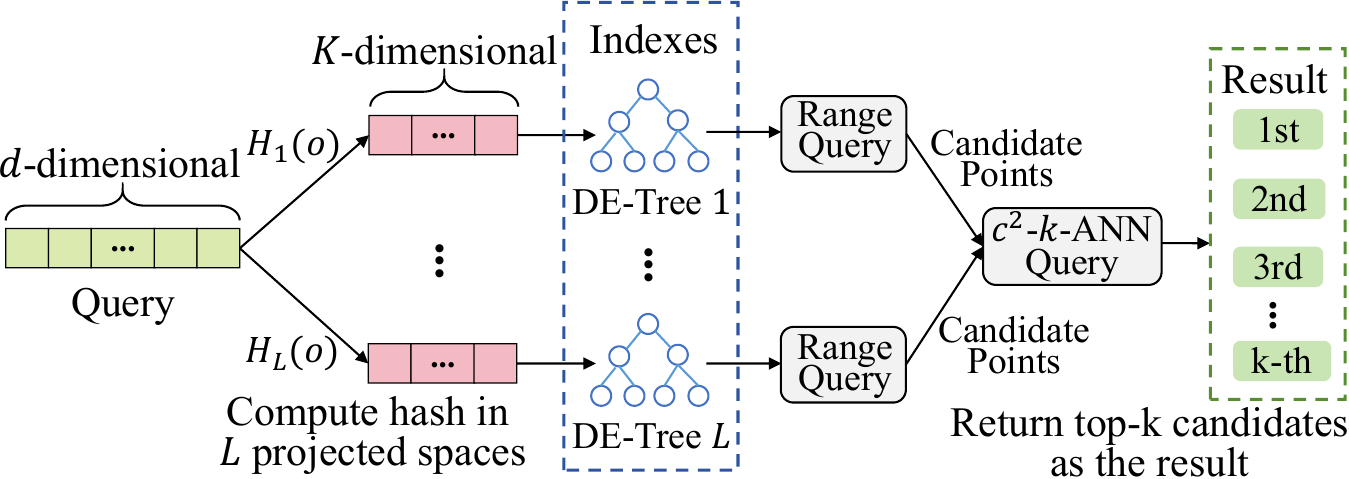}
		\label{overview2}}
	\caption{Overview of the DET-LSH workflow.}
	\label{overview}
\end{figure*}

\section{The DET-LSH Method} \label{chapter4}

In this section, we present the details of DET-LSH and the design of Dynamic Encoding Tree (DE-Tree). 
DET-LSH consists of three phases: an encoding phase to encode the LSH-based projected points into iSAX representations; 
an indexing phase to construct DE-Trees based on the iSAX representations; 
a query phase to perform range queries in DE-Trees for ANN search. 

Figure~\ref{overview} provides a high-level overview of the workflow for DET-LSH.
In the encoding phase, $K \cdot L$ hash functions project the raw data into $L$ independent $K$-dimensional spaces, where data-driven breakpoints are dynamically selected to partition each dimension and encode projected points into iSAX representations.
During indexing, a DE-Tree is constructed in each projected space via top-down node splitting based on the iSAX codes.
At query time, the query is mapped to the same $L$ projected spaces, range queries are performed over multiple DE-Trees using their upper and lower bound distance properties to generate a candidate set, and the final $c^2$-$k$-ANN results are returned according to the proposed query strategy.

\begin{algorithm}[tb]
\small

	\caption{Dynamic Encoding}                                                                     
	\label{dynamic_encoding}
	\LinesNumbered
	\KwIn{Parameters $K$, $L$, $n$, all points in projected spaces $P$, sample size $n_s$, number of regions in each projected space $N_r$}
	\KwOut{A set of encoded points $EP$}
	Initialize $EP$ with size $n \cdot L \cdot K$; \\
    $B \leftarrow$ Select breakpoints by running multiple rounds of the \textit{QuickSelect} algorithm combined with the \textit{divide-and-conquer} strategy; \\
	\For{$i=1$ to $L$}{
		\For{$j=1$ to $K$}{
			\For{$z=1$ to $n$}{
                Obtain $o_z$ from $P$; \\
				Use \textit{BinarySearch} to find integer $b \in [1,N_r]$ such that $B_{ij}(b) \leq h_{ij}(o_z) \leq B_{ij}(b+1)$; \\
				$EP_{ij}(o_z) \leftarrow b$-th symbol in the 8-bit alphabet; \\			
			}
		}
	}
	\Return $EP$; \\
\end{algorithm}

\subsection{Encoding Phase} \label{Encoding Phase}

Before indexing projected points with DE-Trees, DET-LSH encodes them into indexable Symbolic Aggregate approXimation (iSAX) representations~\cite{isax}. 
iSAX partitions each dimension using a set of non-uniform breakpoints and assigns a bit-wise symbol to each resulting region, such that points falling in the same region share identical iSAX codes. 
As illustrated in Figure~\ref{isaxencoding}, using three breakpoints per dimension in a two-dimensional space yields four regions per dimension, represented by 2-bit symbols, and thus 16 regions in total. 
An index constructed over these representations is shown in Figure~\ref{isaxindex}. 
In practice, high-quality approximations can be achieved with at most 256 symbols per dimension, allowing each dimension to be encoded using an 8-bit alphabet~\cite{iSAX2}.

This design benefits LSH-based ANN search by independently encoding each dimension, avoiding expensive multi-dimensional partitioning and thus improving indexing efficiency. Moreover, the symbolic regions enable efficient upper and lower bound distance computation for effective pruning during range queries, enhancing scalability and query performance in high-dimensional spaces.

\noindent \textbf{Static encoding scheme.} 
In data series similarity search, traditional iSAX-based methods adopt a static encoding scheme~\cite{messi,paris+,sing}. 
Exploiting the approximately Gaussian distribution of normalized series~\cite{isax}, breakpoints $b_1,\ldots,b_{a-1}$ are chosen such that each interval under the $\mathcal{N}(0,1)$ curve has equal probability mass $\frac{1}{a}$, with $b_0=-\infty$ and $b_a=+\infty$, yielding dataset-independent encodings. 
Existing methods follow this static scheme by mapping each coordinate to its enclosing breakpoint interval via a lookup table. 
However, since ANN datasets often exhibit arbitrary and non-Gaussian distributions, such static encoding becomes unsuitable.

\noindent \textbf{Dynamic encoding scheme.}
DET-LSH adopts a dynamic encoding scheme that selects breakpoints according to the data distribution, aiming to evenly partition points such that each region contains approximately the same number of points. 
Given a dataset of cardinality $n$, we first apply $K\!\cdot\!L$ hash functions to obtain $K$-dimensional representations in $L$ projected spaces, where $\mathcal H_i(o)=[h_{i1}(o),\ldots,h_{iK}(o)]$. 
For each projected space $i$ and dimension $j$, let $C_{ij}=[h_{ij}(o_1),\ldots,h_{ij}(o_n)]$ denote the set of coordinates, and let $C^{\uparrow}_{ij}$ be its sorted version in ascending order. 
To divide each dimension into $N_r=256$ regions, we select breakpoints $B_{ij}$ such that 
$B_{ij}(z)=C^{\uparrow}_{ij}(\lfloor n/N_r \rfloor \cdot (z-1))$ for $z=2,\ldots,N_r$, with boundary conditions $B_{ij}(1)=C^{\uparrow}_{ij}(1)$ and $B_{ij}(N_r+1)=C^{\uparrow}_{ij}(n)$. 
Thus, for each dimension in each projected space, $N_r\!+\!1$ breakpoints are determined dynamically, and each coordinate $h_{ij}(o)$ is independently encoded according to its corresponding breakpoints $B_{ij}$.

\begin{algorithm}[tb]
\small

	\caption{Create Index}                                                                        
	\label{create_index}
	\LinesNumbered
	\KwIn{Parameters $K$, $L$, $n$, encoded points set $EP$, maximum size of a leaf node $max\_size$}
	\KwOut{A set of DE-Trees: $DETs=[T_1,...,T_L]$}
	\For{$i=1$ to $L$}{
		Initialize $T_i$ and generate $2^{K}$ first-layer nodes as the original leaf nodes; \\
		\For{$z=1$ to $n$}{
			$ep_i(o_z) \leftarrow (EP_{i1}(o_z),...,EP_{iK}(o_z))$; \\
			$pos_z \leftarrow$ the position of $o_z$ in the dataset; \\
			$target\_leaf \leftarrow$ leaf node of $T_i$ to insert $\langle ep_i(o_z),pos_z \rangle$; \\
			\While{$sizeof(target\_leaf) \geq max\_size$}{
				SplitNode($target\_leaf$); \\
				$target\_leaf \leftarrow$ the new leaf node to insert $\langle ep_i(o_z),pos_z \rangle$; \\
			}
			Insert $\langle ep_i(o_z),pos_z \rangle$ to $target\_leaf$; \\
		}
	}
	\Return $DETs$; \\
\end{algorithm}

A straightforward approach is to fully sort $C_{ij}$ to obtain $C^{\uparrow}_{ij}$ and then extract the desired breakpoints. 
However, since only $N_r\!+\!1$ order statistics are required, complete sorting is wasteful. 
Accordingly, we design a dynamic encoding algorithm (Algorithm~\ref{dynamic_encoding}) that operates directly on the unordered $C_{ij}$. 
As illustrated in Figure~\ref{breakpoints}, breakpoints are obtained through multiple rounds of \emph{QuickSelect} within the \emph{divide-and-conquer} strategy. 
To further improve efficiency, we randomly sample $n_s$ points from the dataset and compute breakpoints on the sampled set, with $n_s$ set to $0.1n$ in practice.
For an unordered set $C_{ij}$, \emph{QuickSelect}$(start,q,end)$ identifies the $q$-th smallest element within $[start,end]$ and places it at position $start+q$, such that elements to its left (right) are smaller (larger). 
Thus, a single breakpoint can be obtained with one \emph{QuickSelect} call. 
With $N_r=256$, we apply a \emph{divide-and-conquer} strategy: $\log_2 N_r$ rounds are performed, where the $z$-th round selects $2^{z-1}$ breakpoints by running \emph{QuickSelect} on the subregions generated in the previous round ($z=1,\ldots,\log_2 N_r$). 
For each $C_{ij}$, the minimum and maximum elements are taken as $B_{ij}(1)$ and $B_{ij}(N_r\!+\!1)$, respectively. 
This strategy yields a $3\times$ speedup over full sorting (Section~\ref{selfevaluation}). 
Given the breakpoint set $B$, Algorithm~\ref{dynamic_encoding} encodes all points into iSAX representations and outputs the encoded dataset (lines~3-8).

\begin{figure}[tb] 
	\centering
	\includegraphics[width=0.85\linewidth]{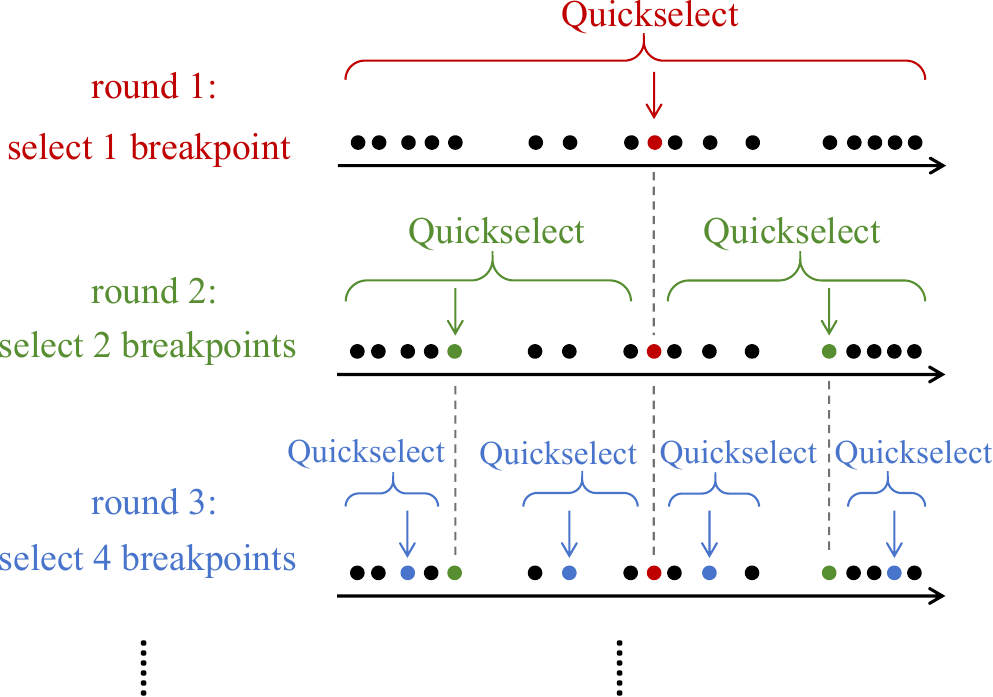}
	\caption{Illustration of the breakpoints selection process.}
	\label{breakpoints}
\end{figure}

\subsection{Indexing Phase} \label{Indexing Phase}

As discussed above, DET-LSH constructs $L$ DE-Trees to support query processing. 
Algorithm~\ref{create_index} outlines the construction of these trees from the encoded point set $EP$. 
For each DE-Tree, construction begins by initializing the first-layer nodes as the root’s children (line~2). 
According to the iSAX encoding rules, each dimension is initially split into two cases, $0^*$ and $1^*$, yielding $2^K$ first-layer nodes per tree (Figure~\ref{isaxindex}). 
For each data point $o_z$, its encoded representation $ep_i(o_z)$ in the $i$-th DE-Tree $T_i$ and its dataset position $pos_z$ are obtained (lines~3-5), and the pair $\langle ep_i(o_z),pos_z \rangle$ is inserted into the corresponding leaf node (line~6). 
If the target leaf node overflows, it is recursively split until insertion succeeds (lines~7-10). 
Only leaf nodes store point information (encoded representations and positions), whereas internal nodes maintain solely index metadata.

In a DE-Tree, except the root node that has $2^K$ children, other internal nodes have only two children.  
This is because node splitting performs a binary refinement on a single selected dimension among the $K$ dimensions. 
As illustrated in Figure~\ref{isaxindex}, splitting the node $[0^*,0^*]$ along the first dimension yields two children with representations $[00,0^*]$ and $[01,0^*]$. 
The choice of splitting dimension is critical: to promote balanced trees, we select the dimension that most evenly partitions the points between the two child nodes.

\begin{figure} [t!]
	\centering
	\subfigcapskip=5pt
	\subfigure[Encode data points into iSAX representations.]{
		\includegraphics[width=0.75\linewidth]{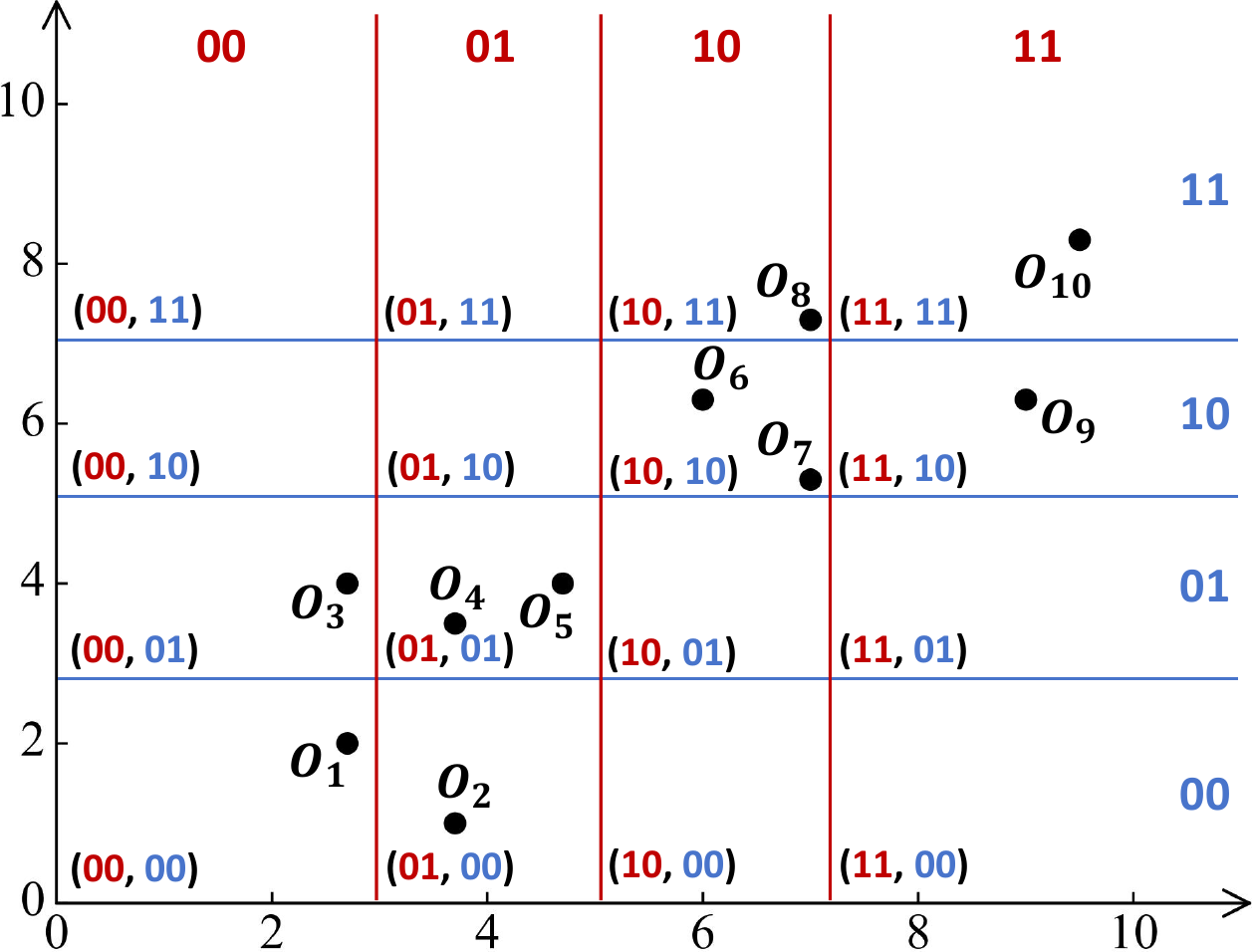}
		\label{isaxencoding}}
	\subfigure[An index based on the iSAX representations.]{
		\includegraphics[width=0.75\linewidth]{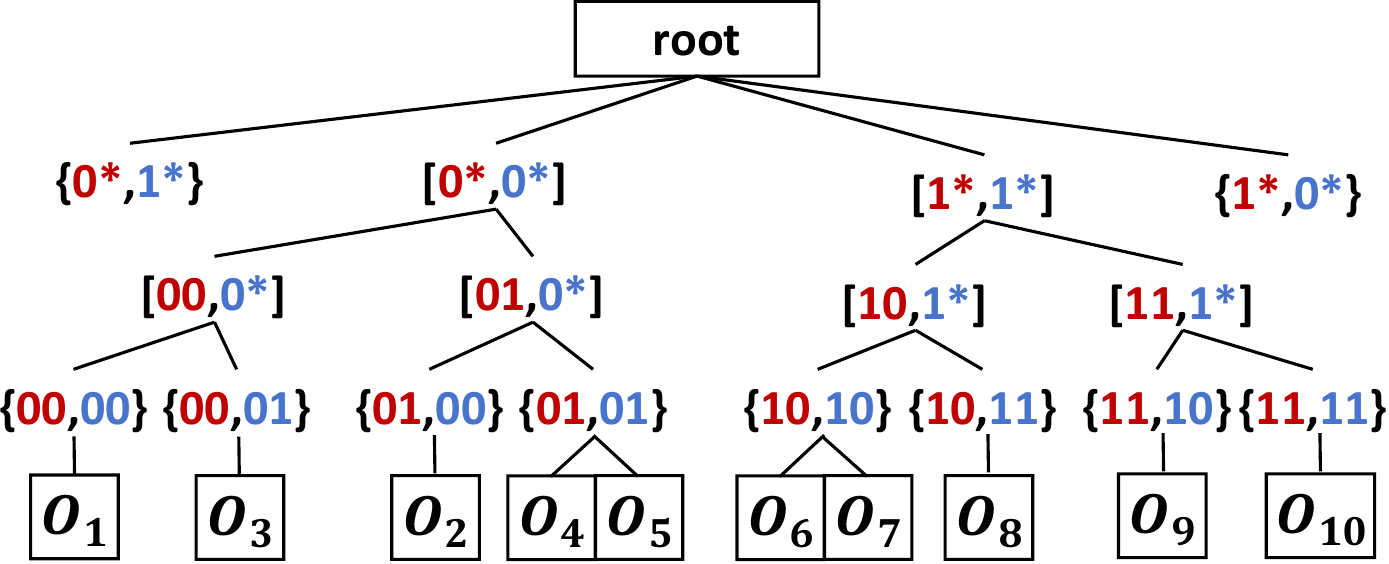}
		\label{isaxindex}}
	\caption{Illustration of the Dynamic Encoding Tree (DE-Tree).}
	\label{Encoding-based Trees}
\end{figure}

\begin{algorithm}[tb]
\small

	\caption{DET Range Query}                                                                   
	\label{range_query}
	\LinesNumbered
	\KwIn{A projected query point $q^\prime$, the search radius $r^\prime$, the index DE-Tree $T$, project dimension $K$}
	\KwOut{A set of points $S$}
	Initialize a points set $S \leftarrow \varnothing$; \\
	\For{$i=1$ to $2^K$}{
		$node \leftarrow$ the $i$-th child of root node in $T$; \\
        Traverse the subtree rooted at $node$ and add to $S$ the data points in its leaf nodes whose distance to $q^\prime$ is smaller than the search radius $r^\prime$; \\
	}
	\Return $S$;
\end{algorithm}

This design of DE-Tree motivates performing a range query before the kNN search. 
The range query efficiently prunes the search space using the computed lower and upper bound distances, thereby restricting the search to a much smaller candidate set. 
The subsequent kNN query is then executed only within this refined set, reducing unnecessary distance computations and improving overall query efficiency.

\begin{algorithm}[tb]
	\caption{($r$,$c$)-ANN Query}   
 \small
 
	\label{rcann}
	\LinesNumbered
	\KwIn{A query point $q$, parameters $K$, $L$, $n$, $c$, $r$, $\epsilon$, $\beta$, index DE-Trees $DETs=[T_1,...,T_L]$}
	\KwOut{A point $o$ or $\varnothing$}
	Initialize a candidate set $S \leftarrow \varnothing$; \\
	\For{$i=1$ to $L$}{
		Compute $q_i^\prime=H _i(q)=[h_{i1}(q),...,h_{iK}(q)]$; \\
		$S_i \leftarrow$ \textbf{call} DETRangeQuery($q_i^\prime,\epsilon \cdot r, T_i, K$); \\
		$S \leftarrow S \cup S_i$; \\
		\If{$\lvert S \rvert \geq \beta n+1$}{
			\Return the point $o$ closest to $q$ in $S$; \\
		}
	}
	\If{$\lvert \left\{ o \mid o \in S \land \left\|o,q\right\| \leq c \cdot r \right\} \rvert \geq 1$}{
		\Return the point $o$ closest to $q$ in $S$; \\
	}
	\Return $\varnothing$;
\end{algorithm}

\begin{algorithm}[ht]
	\caption{$c^2$-$k$-ANN Query}               
 \small

	\label{ckann}
	\LinesNumbered
	\KwIn{A query point $q$, parameters $K$, $L$, $n$, $c$, $r_{min}$, $\epsilon$, $\beta$, $k$, index DE-Trees $DETs=[T_1,...,T_L]$}
	\KwOut{$k$ nearest points to $q$ in $S$}
	Initialize a candidate set $S \leftarrow \varnothing$ and set $r \leftarrow r_{min}$; \\
	\While{\textit{TRUE}}{
		\For{$i=1$ to $L$}{
			Compute $q_i^\prime=H _i(q)=[h_{i1}(q),...,h_{iK}(q)]$; \\
			$S_i \leftarrow$ \textbf{call} DETRangeQuery($q_i^\prime,\epsilon \cdot r, T_i, K$); \\
			$S \leftarrow S \cup S_i$; \\
			\If{$\lvert S \rvert \geq \beta n+k$}{
				\Return the \textit{top}-$k$ points closest to $q$ in $S$; \\
			}
		}
		\If{$\lvert \left\{ o \mid o \in S \land \left\|o,q\right\| \leq c \cdot r \right\} \rvert \geq k$}{
			\Return the \textit{top}-$k$ points closest to $q$ in $S$; \\
		}
		$r \leftarrow c \cdot r$;
	}
\end{algorithm}

\noindent \textbf{DET Range Query.} 
Algorithm~\ref{range_query} performs range queries on a DE-Tree by traversing subtrees to identify leaf nodes containing points within the search radius $r'$. 
The traversal starts from all $2^K$ children of the root as entry points and proceeds recursively (lines~2-4). 
During traversal, nodes are pruned using lower and upper bound distances to the query point $q'$. 
If a node’s lower bound distance exceeds $r'$, all points in its subtree can be safely discarded. 
For a leaf node, if its upper bound distance is no greater than $r'$, all contained points are directly added to the candidate set $S$; if $r'$ lies between the lower and upper bounds, only points within distance $r'$ are examined and added. 
Non-leaf nodes that cannot be pruned are further expanded by traversing their subtrees.

\subsection{Query Phase} \label{Query Phase}

\noindent \textbf{($r$,$c$)-ANN Query.} 
Algorithm~\ref{rcann} describes how DET-LSH answers an $(r,c)$-ANN query for an arbitrary search radius $r$. 
After indexing, DET-LSH maintains $L$ DE-Trees $T_1,\ldots,T_L$. 
Given a query $q$, we process the $L$ projected spaces sequentially: for the $i$-th space, the projected query $q_i'$ is first computed (line~3), and a range query is executed on $T_i$ using Algorithm~\ref{range_query} with search radius $\epsilon r$ (line~4). 
The parameter $\epsilon$ ensures that if $\|o-q\|\le r$, then $\|o'-q'\|\le \epsilon r$ holds with constant probability, as formally analyzed in Lemma~\ref{lemma3}. 
Candidates returned from each range query are accumulated in a set $S$ (line~5). 
Once $|S| \geq \beta n+1$, the point closest to $q$ is returned, where $\beta$ is the maximum false positive percentage (lines~6-7). 
After all $L$ trees are processed, if $|S|<\beta n+1$ and $S$ contains at least one point within distance $c \cdot r$ from $q$, the nearest such point is returned (lines~8-9); otherwise, the algorithm returns null (line~10). 
According to Theorem~\ref{theorem1}, DET-LSH can correctly answer an ($r$,$c$)-ANN query with a constant probability.

\noindent \textbf{$c^2$-$k$-ANN Query.}
Since the true nearest neighbor $o^*$ and its distance $\|q,o^*\|$ are unknown in advance, a $c$-$k$-ANN query cannot be executed with a fixed radius $r$ as in ($r$,$c$)-ANN. 
Instead, DET-LSH issues a sequence of ($r$,$c$)-ANN queries with progressively increasing radii until sufficient candidates are obtained. 
Algorithm~\ref{ckann} details this procedure: most steps (lines~3-10) mirror Algorithm~\ref{rcann} (lines~2-9), with the key difference that termination and return conditions explicitly account for $k$. 
When neither termination condition is met, the search radius is enlarged for the next iteration (line~11). 
According to Theorem~\ref{theorem2}, DET-LSH can correctly answer a $c^2$-$k$-ANN query with a constant probability.

\begin{figure}[tb] 
	\centering
	\includegraphics[width=\linewidth]{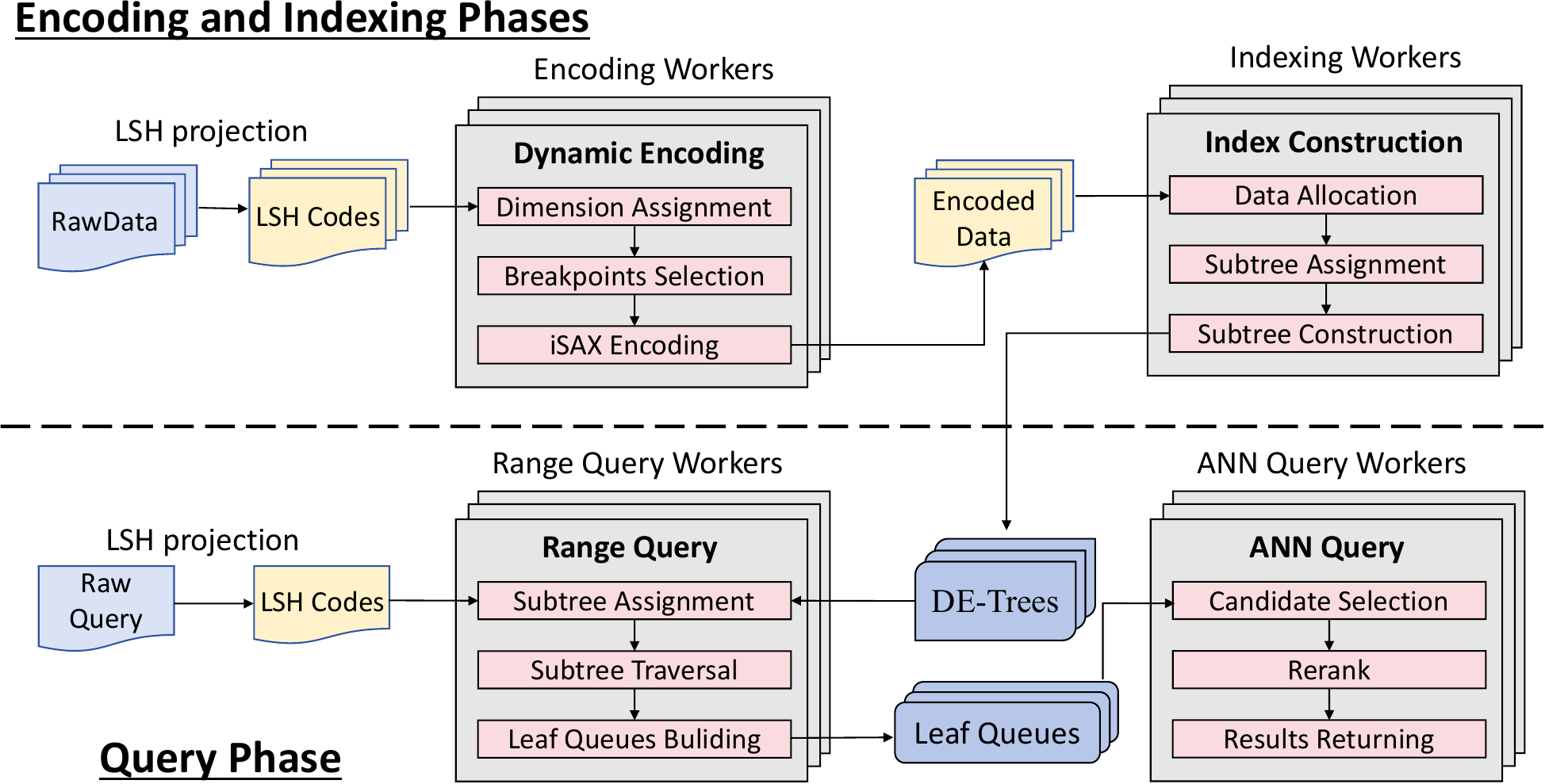}
	\caption{Overview of the PDET-LSH workflow.}
	\label{pdetlsh_overview}
\end{figure}

\section{The PDET-LSH Method} \label{pdetlsh}

In this section, we describe PDET-LSH, an in-memory LSH-based method that takes advantage of the parallelization opportunities provided by multicore CPUs. 
PDET-LSH has the same query accuracy as DET-LSH, while greatly improving its index-building and query-answering efficiency.

The parallelism strategies we design in PDET-LSH are governed by two main principles: eliminate synchronization overheads as much as possible, 
and balance the workload of multiple workers.
To achieve these principles, we design task-specific workload assignment strategies to balance workload across workers, and introduce a multi-queue buffering mechanism to store leaf nodes retrieved during range queries, thereby minimizing synchronization and waiting overheads among concurrent workers.
Note that our key design decision stems from the fact that in our proposed DE-Tree, the root node has a large number of child nodes. 
This allows us to introduce a lot of concurrency at a low synchronization cost.

Figure~\ref{pdetlsh_overview} provides an overview of the workflow for PDET-LSH.
In the encoding phase (dimension-partitioned) and indexing phase (data-partitioned), each worker independently handles its tasks on assigned subspace or data subset, performing breakpoints selection, iSAX encoding, and subtree construction.
During queries, range query workers traverse DE-Trees to build candidate leaf queues, while ANN query workers select and rerank candidates in parallel. 
This parallel design exploits both data-level and space-level independence, significantly improving indexing and query efficiency.

\begin{algorithm}[t]
\small
	\caption{Parallel Dynamic Encoding}                                 
	\label{parallel_dynamic_encoding}
	\LinesNumbered
	\KwIn{Parameters $K$, $L$, $n$, all points in projected spaces $P$, number of regions in each projected space $N_r$, number of workers $N_w$}
	\KwOut{A set of encoded points $EP$}
	Initialize the breakpoints set $B$ with size $L \cdot K \cdot (N_r+1)$; \\
        Initialize the encoded points set $EP$ with size $n \cdot L \cdot K$; \\
        \For{$wid=0$ to $N_w-1$}{
        $start\_dimension \leftarrow wid \cdot \lfloor \frac{K}{N_w} \rfloor$; \\
        $end\_dimension \leftarrow (wid+1) \cdot \lfloor \frac{K}{N_w} \rfloor$; \\
        Obtain the dimensions of all points in the scope ($start\_dimension$, $end\_dimension$) from $P$; \\
        Create a worker to execute Breakpoints Selection in the scope; \\
        Create a worker to execute Dynamic Encoding in the scope; \\
        }
	Wait for all workers to finish their work; \\
	\Return $EP$; \\
\end{algorithm}

\begin{algorithm}[t]
\small

	\caption{Parallel Create Index}                                                                       
	\label{parallel_create_index}
	\LinesNumbered
	\KwIn{Parameters $K$, $L$, $n$, encoded points set $EP$, number of workers $N_w$}
	\KwOut{A set of DE-Trees: $DETs=[T_1,...,T_L]$}
	\For{$i=1$ to $L$}{
		Initialize $T_i$ and generate $2^{K}$ first-layer nodes as the original leaf nodes; \\
		\For{$wid=0$ to $N_w - 1$}{
        $start\_point \leftarrow wid \cdot \lfloor \frac{n}{N_w} \rfloor$; \\
        $end\_point \leftarrow (wid+1) \cdot \lfloor \frac{n}{N_w} \rfloor$; \\
         Obtain the encoded points in the working scope ($start\_point$, $end\_point$) from $EP$; \\
			Create a worker to insert the encoded nodes that belong to the working scope into the appropriate first-layer tree nodes; \\
            Set a barrier to synchronize all workers; \\
            Create a worker to continuously and exclusively acquires first-level tree nodes and constructs subtrees rooted at those nodes; \\
		}
	}
	\Return $DETs$; \\
\end{algorithm}

\subsection{Encoding Phase} \label{encoding pdet}

Algorithm~\ref{parallel_dynamic_encoding} presents our design for parallel dynamic encoding. 
The main idea is that multiple workers can work independently within the scope of work assigned to them. 
Specifically, we first divide the work scope for $N_w$ workers, that is, specific dimensions within the $K$-dimensional projected space (lines 3-5). 
Then, each worker needs to create a worker (thread) and execute the breakpoints selection and dynamic encoding successively in all $L$ DE-Trees (lines 6-8). 
The decision of selecting breakpoints and encoding in each dimension is similar to that in Algorithm~\ref{dynamic_encoding}. 
When all workers finish their work, we get the encoded points set $EP$ (lines 9-10).

\subsection{Indexing Phase} \label{indexing pdet}

The root node of a DE-Tree may have up to $2^K$ child nodes (i.e., first-layer nodes). Each of these nodes acts as the root of a complete binary tree that recursively partitions the corresponding subspace. 
Each binary tree can be constructed independently if all data points are assigned to appropriate first-layer nodes. 
Therefore, the DE-Tree structure naturally lends itself to parallel construction.

Algorithm~\ref{parallel_create_index} presents the pseudocode for the parallel index creation. 
We initialize and build each DE-Tree sequentially (lines 1-2). 
First, we initialize $N_w$ workers for each DE-Tree and assign them working scopes (specific points in the dataset) (lines 3-6). 
Then, each worker creates a thread to insert the encoded points within the working scope into the appropriate first-layer nodes of the DE-Tree (line 7).
Up until all workers complete the task (line 8),
they continue to fetch unvisited first-layer nodes, and build a complete binary subtree with the corresponding first-layer node acting as the root node for that subtree (line 9).

\begin{figure}[tb] 
	\centering
	\includegraphics[width=0.85\linewidth]{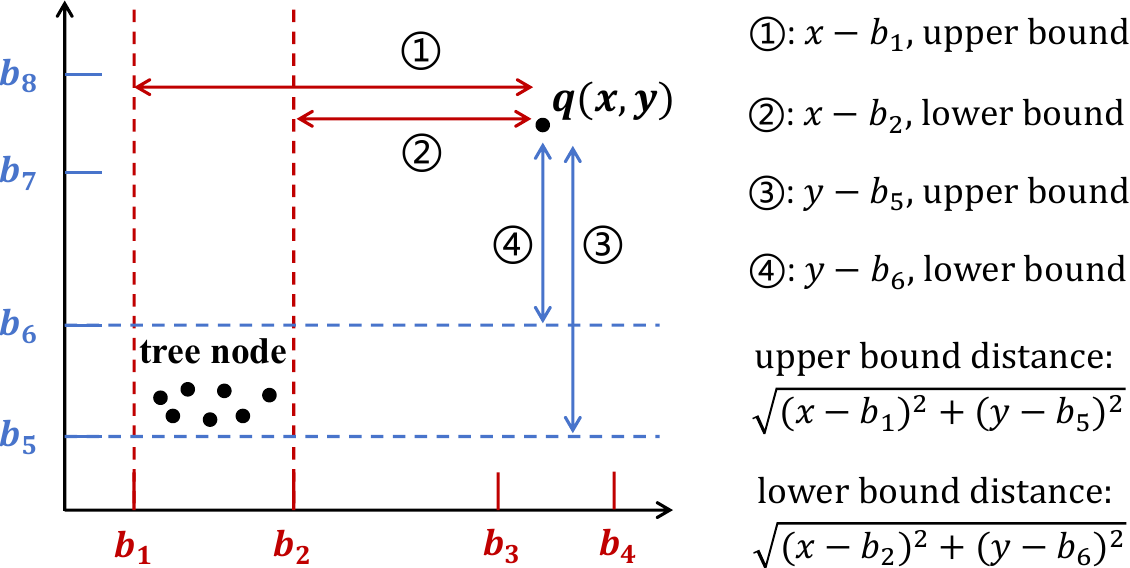}
	\caption{An example of calculating the upper and lower bound distances between a query $q$ and a DE-Tree node.}
	\label{dist}
\end{figure}

\subsection{Querying Phase} \label{query pdet}

Since the query strategy of PDET-LSH is based on the Euclidean distance metric, range queries can improve the efficiency of obtaining candidate points. 
In a DE-Tree, each space is divided into different regions by multiple breakpoints. 
The breakpoints on all sides of a region can be used to calculate the upper and lower bound distances between two points or between a point and a tree node. 

Figure~\ref{dist} gives an example for calculating the upper and lower bound distances between a query $q$ and a DE-Tree node in a two-dimensional space. 
Assume that the node is divided by breakpoints $b_1, b_2, b_5, b_6$, and the coordinates of $q$ are $(x,y)$. 
First, we need to calculate the upper and lower bound distances of each dimension based on the coordinates of $q$ and the breakpoints dividing the node, i.e., $x-b_1$ (upper bound), $x-b_2$ (lower bound), and $y-b_5$ (upper bound), $y-b_6$ (lower bound). 
Then, we use all the upper/lower bound distances obtained in each dimension to calculate the overall upper/lower bound distance: $\sqrt{(x-b_1)^2+(y-b_5)^2}$ and  $\sqrt{(x-b_2)^2+(y-b_6)^2}$. 
Therefore, the upper and lower bound distances between a query and a DE-Tree node can be calculated easily, which is suitable for range queries in PDET-LSH.

Range queries can effectively prune the search space using bound distances, improving query scalability and performance.
Algorithm~\ref{parallel_range_query} shows the parallel range query strategy we designed. 
First, we initialize $N_q$ queues to hold leaf nodes (lines 1-2), avoiding multiple workers competing for a single queue. 
Then, we create $N_w$ workers to continuously and exclusively traverse the subtrees and enqueue the leaf nodes that satisfy the distance bound conditions (lines 4-5).
Finally, merge all leaf nodes in $N_q$ queues into $S_{leaf}$ and return it (lines 6-7). 

For $c^2$-$k$-ANN query, the parallelization optimizations in PDET-LSH do not alter the DET-LSH query pipeline of DET-LSH (Algorithm \ref{ckann}); they only parallelize three of its steps, i.e., range query (line 5 in Algorithm \ref{ckann}), candidate selection (line 6 in Algorithm \ref{ckann}), and rerank (lines 8, 10 in Algorithm \ref{ckann}).
In fact, the parallel optimizations do not alter the outcomes of these three steps—namely, the set of leaf nodes, the candidate set, and the rerank results—which remain identical to those of Algorithm~\ref{ckann}.

We note that the results returned by PDET-LSH are the same as those returned by the optimized Algorithm~\ref{ckann} (in Section~\ref{queryoptimize}). 
That is, the query accuracy of PDET-LSH is the same as that of DET-LSH. 
In Section~\ref{comparison}, the experimental evaluation confirms that PEDT-LSH and DET-LSH have the same recall and overall ratio, outperforming other methods.

\begin{algorithm}[tb]
\small

	\caption{Parallel Range Query}                                                                     
	\label{parallel_range_query}
	\LinesNumbered
	\KwIn{A projected query point $q^\prime$, the search radius $r^\prime$, the index DE-Tree $T$, number of workers $N_w$, number of queues $N_q$}
	\KwOut{A set of leaf nodes $S_{leaf}$}
        \For{$i=0$ to $N_q - 1$}{
        Initialize the $i$-th queue $queues[i]$ to hold leaf nodes; \\
        }
	Initialize a leaf nodes set $S_{leaf} \leftarrow \varnothing$; \\
	\For{$i=0$ to $N_w - 1$}{
		Create a worker to continuously and exclusively acquires first-level tree nodes of $T$, performs Range Query with $q^\prime$ and $r^\prime$ on the subtree, inserts the qualified leaf nodes into a random queue; \\
        
	}
        Merge all leaf nodes in $N_q$ queues into $S_{leaf}$; \\
	\Return $S_{leaf}$;
\end{algorithm}

\section{Theoretical Analysis} \label{chapter5}

\subsection{Quality Guarantee}  \label{chapter5.1}

Let $\mathcal H _i(o)=[h_{i1}(o),...,h_{iK}(o)]$ denote a data point $o$ in the $i$-th projected space, where $i=1,...,L$. We define three events as follows:

\begin{itemize}
	\item \textbf{E1:} If there exists a point $o$ satisfying $\left\|o,q\right\| \leq r$, then its projected distance to $q$, i.e., $\left\|\mathcal H _i(o),\mathcal H _i(q)\right\|$, is smaller than $\epsilon r$ for some $i=1,...,L$;
	\item \textbf{E2:} If there exists a point $o$ satisfying $\left\|o,q\right\| > cr$, then its projected distance to $q$, i.e., $\left\|\mathcal H _i(o),\mathcal H _i(q)\right\|$, is smaller than $\epsilon r$ for some $i=1,...,L$;
	\item \textbf{E3:} Fewer than $\beta n$ points satisfying \textbf{E2} in dataset $\mathcal D$.
\end{itemize}

\begin{lemma}\label{lemma3}
	Given $K$ and $c$, setting $L=-\frac{1}{\ln{\alpha_1}}$ and $\beta=2-2\alpha_2^{-\frac{1}{ln\alpha_1}}$ such that $\alpha_1$, $\alpha_2$, and $\epsilon$ satisfy Equation \ref{eq2}, the probability that \textbf{E1} occurs is at least $1-\frac{1}{\mathrm{e}}$ and the probability that \textbf{E3} occurs is at least $\frac{1}{2}$.
	
	\begin{equation}  \label{eq2}
		\epsilon^2=\chi^2_{\alpha_1}(K)=c^2 \cdot \chi^2_{\alpha_2}(K).
	\end{equation}
\end{lemma}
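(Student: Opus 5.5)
Both claims follow from Lemma~\ref{lemma2} applied with the two quantiles in Equation~\ref{eq2}, together with independence across the $L$ projected spaces. The plan is: bound the failure probability of \textbf{E1} by a product of $L$ independent single-space failures, and bound the number of false positives for \textbf{E3} by Markov's inequality.

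\textbf{E1.} Fix a point $o$ with $s=\left\|o,q\right\|\le r$. In the $i$-th projected space let $s'_i=\left\|\mathcal H_i(o),\mathcal H_i(q)\right\|$. By Lemma~\ref{lemma2} with $\alpha=\alpha_1$ and $\epsilon=\sqrt{\chi^2_{\alpha_1}(K)}$,
\begin{equation*}
\Pr[s'_i>\epsilon r]\le \Pr[s'_i>\epsilon s]=\alpha_1 .
\end{equation*}
The first inequality holds because $s\le r$; the chi-square distribution is continuous, so strict versus non-strict inequalities do not matter. The $L$ projected spaces use independent hash functions, so the probability that $o$ fails in every space is at most $\alpha_1^L$. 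With $L=-1/\ln\alpha_1$ this is
\begin{equation*}
\alpha_1^{L}=\exp(L\ln\alpha_1)=\mathrm{e}^{-1}.
\end{equation*}
Hence \textbf{E1} holds with probability at least $1-1/\mathrm{e}$. Strictly, $L$ must be an integer; taking the ceiling only decreases $\alpha_1^L$, and I would note this in passing.

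\textbf{E3.} Fix a point $o$ with $s>cr$. By Equation~\ref{eq2}, $\epsilon r=cr\sqrt{\chi^2_{\alpha_2}(K)}<s\sqrt{\chi^2_{\alpha_2}(K)}$. Lemma~\ref{lemma2} with $\alpha=1-\alpha_2$ gives $\Pr[s'_i\le s\sqrt{\chi^2_{\alpha_2}(K)}]=1-\alpha_2$, so in a single space
\begin{equation*}
\Pr[s'_i<\epsilon r]\le 1-\alpha_2 .
\end{equation*}
By independence, the probability that $o$ satisfies \textbf{E2} in at least one of the $L$ spaces is at most $1-\alpha_2^{L}=1-\alpha_2^{-1/\ln\alpha_1}$. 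Let $X$ be the number of far points satisfying \textbf{E2}. Linearity of expectation gives
\begin{equation*}
\mathbb E[X]\le n\bigl(1-\alpha_2^{-1/\ln\alpha_1}\bigr)=\beta n/2 .
\end{equation*}
Markov's inequality then gives $\Pr[X\ge\beta n]\le 1/2$, so \textbf{E3} holds with probability at least $1/2$.

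The main obstacle is this second part, specifically aligning the quantile directions. One has to check that the relation $\epsilon^2=c^2\chi^2_{\alpha_2}(K)$, combined with $s>cr$, turns into an upper bound of $1-\alpha_2$ on the per-space collision probability via the complementary form of Lemma~\ref{lemma2}. One must also confirm that the specific $\beta=2-2\alpha_2^{L}$ is exactly what makes Markov's inequality give the constant $1/2$. No independence between different data points is needed, since only linearity of expectation is used.
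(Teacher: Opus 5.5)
Your proof is correct and follows the paper's argument step for step. For \textbf{E1} you use Lemma~\ref{lemma2} at level $\alpha_1$ plus independence of the $L$ spaces; for \textbf{E3} you use the per-space bound at level $\alpha_2$, independence, linearity of expectation, and Markov's inequality with $\beta/2=1-\alpha_2^{L}$. The only cosmetic difference is that the paper writes the per-space step as $\Pr[s_i'>\epsilon r]>\alpha_2$ rather than passing to the complement.

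There are two minor slips. First, $\Pr[s'_i\le s\sqrt{\chi^2_{\alpha_2}(K)}]=1-\alpha_2$ is the complement of Lemma~\ref{lemma2} with $\alpha=\alpha_2$, not Lemma~\ref{lemma2} with $\alpha=1-\alpha_2$. Second, your ceiling remark only covers \textbf{E1}: rounding $L$ up increases $1-\alpha_2^{L}$, so $\beta$ would have to be recomputed as $2-2\alpha_2^{\lceil L\rceil}$ for the Markov step to still give $1/2$.
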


\begin{proof}
	Given a point $o$ satisfying $\left\|o,q\right\| \leq r$, let $s=\left\|o,q\right\|$ and $s_i^\prime=\left\|\mathcal H _i(o),\mathcal H _i(q)\right\|$ denote the distances between $o$ and $q$ in the original space and in the $i$-th projected space, where $i=1,...,L$. From Equation \ref{eq2}, we have $\sqrt{\chi^2_{\alpha_1}(K)}=\epsilon$. For each independent projected space, from Lemma \ref{lemma2}, we have $\Pr{[s_i^\prime>s\sqrt{\chi^2_{\alpha_1}(K)}]} = \Pr{[s_i^\prime>\epsilon s]} = \alpha_1$. Since $s \leq r$, $\Pr{[s_i^\prime>\epsilon r]} \leq \alpha_1$. Considering $L$ projected spaces, we have $\Pr{[\textbf{E1}]}\geq1-\alpha_1^L=1-\frac{1}{\mathrm{e}}$. Likewise, given a point $o$ satisfying $\left\|o,q\right\| > cr$, let $s=\left\|o,q\right\|$ and $s_i^\prime=\left\|\mathcal H _i(o),\mathcal H _i(q)\right\|$ denote the distances between $o$ and $q$ in the original space and in the $i$-th projected space, where $i=1,...,L$. From Equation \ref{eq2}, we have $\sqrt{\chi^2_{\alpha_2}(K)}=\frac{\epsilon}{c}$. For each independent projected space, from Lemma \ref{lemma2}, we have $\Pr{[s_i^\prime>s\sqrt{\chi^2_{\alpha_2}(K)}]} = \Pr{[s_i^\prime>\frac{\epsilon s}{c}]} = \alpha_2$. Since $s > cr$, i.e., $\frac{s}{c} > r$, $\Pr{[s_i^\prime>\epsilon r]} > \alpha_2$. Considering $L$ projected spaces, we have $\Pr{[\textbf{E2}]} \leq 1-\alpha_2^L$, thus the expected number of such points in dataset $\mathcal D$ is upper bounded by $(1-\alpha_2^L) \cdot n$. By \textit{Markov's inequality}, we have $\Pr{[\textbf{E3}]}>1-\frac{(1-\alpha_2^L) \cdot n}{\beta n}=\frac{1}{2}.$
\end{proof}

\begin{theorem}\label{theorem1}
	DET-LSH (Algorithm~\ref{rcann}) answers an ($r$,$c$)-ANN query with at least a constant probability of $\frac{1}{2}-\frac{1}{\mathrm{e}}$.
\end{theorem}

\begin{proof}
	We show that when $\textbf{E1}$ and $\textbf{E3}$ hold at the same time, 
 Algorithm~\ref{rcann} returns an correct ($r$,$c$)-ANN result. 
 The probability of \textbf{E1} and \textbf{E3} occurring at the same time can be calculated as $\Pr{[\textbf{E1}\textbf{E3}]}=\Pr{[\textbf{E1}]}-\Pr{[\textbf{E1}\overline{\textbf{E3}}]} > \Pr{[\textbf{E1}]}-\Pr{[\overline{\textbf{E3}}]}=\frac{1}{2}-\frac{1}{\mathrm{e}}$. 
 When $\textbf{E1}$ and $\textbf{E3}$ hold at the same time, if Algorithm~\ref{rcann} terminates after getting at least $\beta n+1$ candidate points (line 7),
 due to $\textbf{E3}$, there are at most $\beta n$ points satisfying $\left\|o,q\right\| > cr$. 
 Thus we can get at least one point satisfying $\left\|o,q\right\| \leq cr$, and the returned point is obviously a correct result. 
 If the candidate set $S$ has no more than $\beta n+1$ points, but there exists at least one point in $S$ satisfying $\left\|o,q\right\| \leq cr$, 
 Algorithm~\ref{rcann} can also terminate and then return a result correctly (line 9). 
 Otherwise, it indicates that no points satisfying $\left\|o,q\right\| \leq cr$. 
 According to the Definition~\ref{def3} of ($r$,$c$)-ANN, nothing will be returned (line 10). 
 Therefore, when $\textbf{E1}$ and $\textbf{E3}$ hold at the same time, Algorithm~\ref{rcann} can always correctly answer an ($r$,$c$)-ANN query. 
 In other words, DET-LSH (Algorithm~\ref{rcann}) answers an ($r$,$c$)-ANN query with at least a constant probability of $\frac{1}{2}-\frac{1}{\mathrm{e}}$.
\end{proof}

\begin{theorem}\label{theorem2}
	DET-LSH (Algorithm~\ref{ckann}) answers a $c^2$-$k$-ANN query with at least a constant probability of $\frac{1}{2}-\frac{1}{\mathrm{e}}$.
\end{theorem}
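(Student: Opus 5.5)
We reduce Theorem~\ref{theorem2} to the guarantee of Theorem~\ref{theorem1}, following the standard argument of~\cite{indyk1998approximate} for turning a sequence of $(r,c)$-ANN queries into a $c^2$-ANN answer, adapted to $k$ neighbors. Let $r^*_k=\|q,o^*_k\|$. Let $r_j=r_{min}c^j$ denote the radius in round $j$ of Algorithm~\ref{ckann}, and let $j^*$ be the smallest index with $r_{j^*}\geq r^*_k$. We may assume $r_{min}\le r^*_1$; otherwise this holds after rescaling, which is the usual convention. Then $r_{j^*}\le c\, r^*_k$. We condition on the events \textbf{E1} and \textbf{E3} of Lemma~\ref{lemma3}, instantiated at radius $r_{j^*}$. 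We extend \textbf{E1} to require that each of $o^*_1,\dots,o^*_k$ has projected distance below $\epsilon r_{j^*}$ in some space. By the argument in the proof of Theorem~\ref{theorem1}, this conjunction has probability at least $\frac12-\frac1{\mathrm e}$.

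Next we do a case analysis on when the algorithm stops. \textbf{Case 1:} it stops in some round $j\le j^*$ through the condition at line~10. At that point at least $k$ points of $S$ lie within $c r_j\le c r_{j^*}\le c^2 r^*_k$. The $i$-th returned point then satisfies $\|q,o_i\|\le c r_j$. To obtain the per-index bound $\|q,o_i\|\le c^2\|q,o^*_i\|$, I compare with $r^*_i$ rather than $r^*_k$. Define $j_i$ as the smallest round whose radius is at least $r^*_i$. If the algorithm reaches round $j_i$, then \textbf{E1} places $o^*_1,\dots,o^*_i$ in $S$ during that round. Hence the $i$-th closest point of $S$ is within $r^*_i$ of $q$. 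If the algorithm stopped earlier, the $i$-th point is within $c r_j\le c r_{j_i}\le c^2 r^*_i$. \textbf{Case 2:} it stops at line~8 because $|S|\ge \beta n+k$. Under \textbf{E3} at the current radius, at most $\beta n$ candidates are farther than $c r$. So at least $k$ candidates lie within $c r\le c^2 r^*_k$, and the same per-index refinement applies. \textbf{Case 3:} the algorithm reaches round $j^*$. By \textbf{E1}, all of $o^*_1,\dots,o^*_k$ are in $S$ by the end of this round. Each is within $r^*_k\le r_{j^*}\le c r_{j^*}$, so line~10 fires. The algorithm therefore returns exact top-$k$ answers among a superset of the true neighbors, which is trivially correct.

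The hardest step is justifying that a single good event yields correctness at every radius the algorithm actually visits. \textbf{E1} and \textbf{E3} are stated for one fixed $r$, whereas the stopping behaviour depends on earlier rounds as well. The first fix I would try uses monotonicity. \textbf{E1} at radius $r_{j^*}$ is all that Case~3 needs. The \textbf{E3}-type bound in Case~2 is only needed at the terminating radius, which is at most $r_{j^*}$. For $r\le r_{j^*}$, the number of far points that fall within $\epsilon r$ in projection is monotone in $r$, because the projections are fixed across rounds. So the far points counted at a smaller radius are a subset of those counted at $r_{j^*}$. I would therefore define \textbf{E3} with threshold $c r$ at the terminating radius and bound it via Markov exactly as in Lemma~\ref{lemma3}. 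If this monotonicity argument proves too delicate, I would fall back to proving the statement in the aggregate form ``$\|q,o_i\|\le c^2 r^*_k$'', as the paper's own Theorem~\ref{theorem1} does. I would then invoke the \cite{indyk1998approximate} reduction to conclude with probability $\frac12-\frac1{\mathrm e}$.
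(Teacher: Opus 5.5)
\textbf{Gap: the per-index upgrade does not go through.} Your pivot $r_{j^*}$ is fixed by $r^*_k$, hence deterministic, and your case split on how Algorithm~\ref{ckann} stops is the right skeleton. But this skeleton only reaches the aggregate bound $\|q,o_i\|\le c\,r_{j^*}\le c^2r^*_k$. The steps meant to upgrade it to the per-index bound of Definition~\ref{def2} fail, for four reasons.

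\textbf{Case~1.} The refinement needs each of $o^*_1,\dots,o^*_i$ to project within $\epsilon r_{j_i}$ in some space. That is not implied by the event you conditioned on, which gives projection within $\epsilon r_{j^*}$, and $r_{j_i}\le r_{j^*}$. It also needs round $j_i$ to be \emph{completed}, not merely reached, because line~8 can fire mid-round.

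\textbf{Case~2.} The monotonicity rescue is false as stated. The count $N(r)=\lvert\{o:\|o,q\|>cr,\ \exists i,\ \|\mathcal H_i(o),\mathcal H_i(q)\|<\epsilon r\}\rvert$ is not monotone in $r$. Lowering $r$ shrinks the projected ball but enlarges the far set, so a point that is far at a smaller radius need not be far at $r_{j^*}$. Moreover, Markov's inequality cannot be applied ``at the terminating radius'', since that radius is itself a function of the same random projections.

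\textbf{The fallback.} If you freeze the far threshold at $c\,r_{j^*}$, the argument does become monotone and valid, because every candidate gathered in a round $j\le j^*$ projects within $\epsilon r_{j^*}$. But it then yields exactly the aggregate bound. For $k>1$ that is weaker than the theorem, so the fallback does not prove the statement either.

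\textbf{The probability claim.} Extending \textbf{E1} to all $k$ neighbours does not inherit $1-\frac{1}{\mathrm{e}}$. The argument of Lemma~\ref{lemma3} bounds one point's failure probability by $\alpha_1^L=\frac{1}{\mathrm{e}}$. For $k$ points sharing the same projections, a union bound gives only $1-\frac{k}{\mathrm{e}}$. Combined with $\Pr[\overline{\textbf{E3}}]\le\frac12$, this is vacuous once $k\ge 2$.

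\textbf{Comparison with the paper.} The paper gets the per-index form from a single \textbf{E3} by pivoting on a different radius: the radius $r_0$ at which $\lvert S\rvert$ first reaches $\beta n+k$. Neighbours with $r^*_i\le r_0$ are returned exactly via \textbf{E1}. All neighbours with $r^*_i>r_0$ are covered at once by \textbf{E3} at radius $c\,r_0$, since $c^2r_0\le c^2r^*_i$ for each of them. This split is what avoids needing one Markov event per neighbour. However, it treats the random $r_0$ as if it were fixed, and it applies \textbf{E1} to several points and radii at once. So the paper silently makes the very strengthenings you flagged.

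What your deterministic-pivot argument does prove rigorously is the case $k=1$, with probability at least $\frac12-\frac{1}{\mathrm{e}}$. For $k\ge2$, neither your argument nor the paper's, as written, delivers the per-index guarantee at that constant.
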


\begin{proof}
	We show that when $\textbf{E1}$ and $\textbf{E3}$ hold at the same time, Algorithm~\ref{ckann} returns a correct $c^2$-$k$-ANN result. 
 Let $o_i^*$ be the $i$-th exact nearest point to $q$ in $\mathcal D$, we assume that $r_i^* = \left\|o_i^*,q\right\| > r_{min}$, where $r_{min}$ is the initial search radius and $i=1,...,k$. 
 We denote the number of points in the candidate set under search radius $r$ as $\lvert S_r \rvert$. 
 Obviously, when enlarging the search radius $r=r_{min}, r_{min} \cdot c, r_{min} \cdot c^2,...$, there must exist a radius $r_0$ satisfying $\lvert S_{r_0} \rvert < \beta n+k$ and $\lvert S_{c \cdot r_0} \rvert \geq \beta n+k$. 
 The distribution of $r_i^*$ has three cases:
	\begin{enumerate}
		\item \textbf{Case 1:} If for all $i=1,...,k$ satisfying $r_i^* \leq r_0$, which indicates the range queries in all $L$ index trees have been executed at $r=r_0$ (lines 3-8). Due to $\textbf{E1}$, all $r_i^*$ must in $S_{r_0}$. Since $S_{r_0} \subsetneqq S_{c \cdot r_0}$, all $r_i^*$ also must in $S_{c \cdot r_0}$. Therefore, Algorithm~\ref{ckann} returns the exact $k$ nearest points $o_i^*$ to $q$.
		\item \textbf{Case 2:} If for all $i=1,...,k$ satisfying $r_i^* > r_0$, all $r_i^*$ not belong to $S_{r_0}$. Since Algorithm~\ref{ckann} may terminate after executing range queries in part of $L$ index trees at $r=c \cdot r_0$ (line 8), we cannot guarantee that $r_i^* \leq c \cdot r_0$. However, due to $\textbf{E3}$, there are at least $k$ points $o_i$ in $S_{c \cdot r_0}$ satisfying $\left\|o_i,q\right\| \leq c^2r_0$, $i=1,...,k$. Therefore, we have $\left\|o_i,q\right\| \leq c^2r_0 \leq c^2r_i^*$, i.e., each $o_i$ is a $c^2$-ANN point for corresponding $o_i^*$.
		\item \textbf{Case 3:} If there exists an integer $m \in (1,k)$  such that for all $i=1,...,m$ satisfying $r_i^* \leq r_0$ and for all $i=m+1,...,k$ satisfying $r_i^* > r_0$, indicating that \textbf{Case 3} is a combination of \textbf{Case 1} and \textbf{Case 2}. For each $i \in [1,m]$, Algorithm~\ref{ckann} returns the exact nearest point $o_i^*$ to $q$ based on \textbf{Case 1}. For each $i \in [m+1,k]$, Algorithm~\ref{ckann} returns a $c^2$-ANN point for $o_i^*$ based on \textbf{Case 2}.
	\end{enumerate}
	Therefore, when $\textbf{E1}$ and $\textbf{E3}$ hold simultaneously, Algorithm~\ref{ckann} can always correctly answer a $c^2$-$k$-ANN query, i.e., DET-LSH (Algorithm~\ref{ckann}) returns a $c^2$-$k$-ANN with at least a constant probability of $\frac{1}{2}-\frac{1}{\mathrm{e}}$.
\end{proof}

\begin{theorem}\label{theorem3}
	PDET-LSH answers a $c^2$-$k$-ANN query with at least a constant probability of $\frac{1}{2}-\frac{1}{\mathrm{e}}$.
\end{theorem}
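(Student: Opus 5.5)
The plan is to reduce Theorem~\ref{theorem3} to Theorem~\ref{theorem2}. The idea is to show that, for every realization of the $K\cdot L$ hash functions, PDET-LSH and Algorithm~\ref{ckann} return exactly the same output on the same query. Since the randomness (the choice of the vectors $\vec a$) is identical and the events \textbf{E1} and \textbf{E3} depend only on these projections, the success probability transfers unchanged. Concretely, we argue that whenever \textbf{E1} and \textbf{E3} hold simultaneously, the PDET-LSH output coincides with that of Algorithm~\ref{ckann}. By Theorem~\ref{theorem2}, that output is a correct $c^2$-$k$-ANN result. Lemma~\ref{lemma3} then gives $\Pr[\textbf{E1}\textbf{E3}]>\frac{1}{2}-\frac{1}{\mathrm{e}}$.

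The key steps, in order, are as follows.

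\textbf{Encoding and indexing.} Algorithm~\ref{parallel_dynamic_encoding} computes, for each dimension, the same breakpoints $B_{ij}$ and the same iSAX symbols as Algorithm~\ref{dynamic_encoding}. The work is only partitioned by dimension, and each worker writes to disjoint entries of $EP$, so there are no races. Algorithm~\ref{parallel_create_index} first routes every point to its first-layer node. It then builds each first-layer subtree exclusively by a single worker. Hence each subtree contains exactly the same multiset of points as in the sequential DE-Tree. Any difference in split order affects only the internal shape, not the set of stored points per first-layer node.

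\textbf{Range query.} Algorithm~\ref{parallel_range_query} visits every first-layer child exactly once. Within each child it applies the same pruning rule based on the lower and upper bound distances. The rule is sound: a node is discarded only if its lower bound exceeds $r'$, and points are added only if their distance is below $r'$. Therefore the candidate set produced for $(q_i',\epsilon r)$ is exactly $\{o : \|\mathcal H_i(o),q_i'\|<\epsilon r\}$, independent of tree shape and thread interleaving. Merging the $N_q$ queues yields this set regardless of the queue assignment.

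\textbf{Selection and rerank.} The parallel candidate selection and top-$k$ rerank compute deterministic functions of $S$, namely union and sort by $\|o,q\|$. They therefore agree with the sequential versions, with ties broken identically under a fixed tie rule. The control flow of Algorithm~\ref{ckann} (the order of trees, the checks $|S|\ge\beta n+k$, and the radius updates) is executed unchanged.

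The main obstacle I expect is the range query step. Two issues arise there. First, the sequential DE-Tree and its parallel counterpart may differ structurally, so correctness must rest on the fact that the returned set is characterized purely geometrically. Second, the early-termination check at line 7 must be performed only after each tree's full range query completes, so that $|S|$ is examined at the same points as in the sequential algorithm. If the implementation checks mid-tree, I would instead argue directly. Any prefix-based termination still yields a superset of the needed candidates at radius $c r_0$. Under \textbf{E3}, the $c^2$ bound of Case~2 in the proof of Theorem~\ref{theorem2} therefore holds verbatim.
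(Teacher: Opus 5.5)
Your proposal is correct and follows essentially the same route as the paper. The parallelized query steps (range query, candidate selection, rerank) do not change their outcomes and the termination checks of Algorithm~\ref{ckann} are kept, so PDET-LSH returns the same answers as DET-LSH and inherits the $\frac{1}{2}-\frac{1}{\mathrm{e}}$ guarantee of Theorem~\ref{theorem2}. Your justification of the range-query step through the geometric characterization of the returned set is more careful than the paper's bare claim that the leaf sets are identical, since it makes differences in tree shape harmless (note that it relies on the exact pruning of Algorithm~\ref{range_query}, not the relaxed leaf inclusion of Section~\ref{queryoptimize}), and your fallback for mid-tree termination correctly reuses the Case~2 counting argument.
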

\begin{proof}
The parallelization optimization of PDET-LSH does not change the query pipeline of DET-LSH (Algorithm~\ref{ckann}) in steps where theoretical guarantees for result quality are required (e.g., termination conditions).
Specifically, PDET-LSH parallelizes three steps in the query pipeline: range query (line 5), candidate selection (line 8), and rerank (line 11).
In fact, the parallel optimizations do not alter the outcomes of these three steps—namely, the set of leaf nodes, the candidate set, and the rerank results—which remain identical to those of Algorithm~\ref{ckann}.
Moreover, as shown in the proof of Theorem~\ref{theorem1} and Theorem~\ref{theorem2}, the termination condition of algorithms is the key to theoretical guarantees because it can guarantee the quality of the returned results.
The termination condition of PDET-LSH is the same as that of Algorithm~\ref{ckann} (lines 7 and 9), so PDET-LSH and DET-LSH return the same results.
Figures~\ref{scalability} and~\ref{diffk} confirm that under identical parameter settings, PDET-LSH and DET-LSH achieve the same query accuracy (i.e., return identical results).
Therefore, PDET-LSH has the same quality guarantee as DET-LSH, i.e., PDET-LSH returns a $c^2$-$k$-ANN with at least a constant probability of $\frac{1}{2}-\frac{1}{\mathrm{e}}$.
\end{proof}

\subsection{Parameter Settings} \label{chapter5.2}

\subsubsection{DET-LSH} \label{para_det}

\begin{figure}[tb] 
	\centering
	\includegraphics[width=0.53\linewidth]{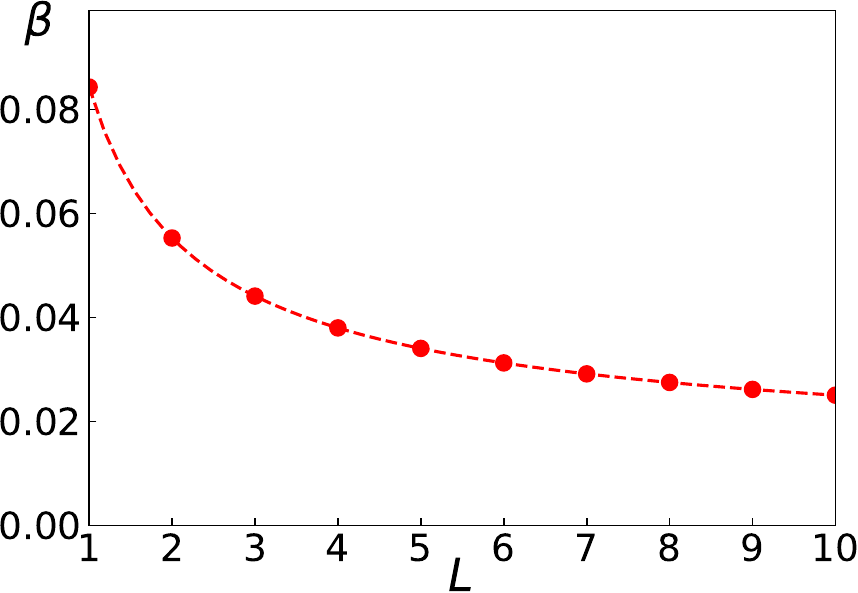}
	\caption{Illustration of theoretical $\beta$ when $L$ varies (for $K=16$ and $c=1.5$), which is in line with Lemma \ref{lemma3}.}
	\label{betal}
\end{figure}

The performance of DET-LSH is affected by several parameters: $L$, $K$, $\beta$, $c$, and so on. 
According to Lemma~\ref{lemma3}, when $K$ and $c$ are set as constants, there is a mathematical relationship between $L$ and $\beta$. 
We set $K=16$ and $c=1.5$ by default, and Figure~\ref{betal} shows the theoretical $\beta$ as $L$ changes, which is in line with Lemma~\ref{lemma3}. Figure~\ref{betal} illustrates that $\beta$ and $L$ have a negative correlation. 
Theoretically, a greater $\beta$ means a higher fault tolerance when querying, so the accuracy of DET-LSH is improved. 
Meanwhile, a greater $L$ means fewer correct results are missed when querying, so the accuracy of DET-LSH can also be improved. 
However, both greater $\beta$ and greater $L$ will reduce query efficiency, so we need to find a balance between $\beta$ and $L$. 
As shown in Figure~\ref{betal}, $L=4$ is a good choice because as $L$ increases, $\beta$ drops rapidly until $L=4$, and then $\beta$ drops slowly. 
Therefore, we choose $L=4$ as the default value. 

For the initial search radius $r_{min}$, we follow the selection scheme proposed in~\cite{pmlsh}. Specifically, to reduce the number of iterations for different $r$ and terminate the query process faster, 
we find a \enquote{magic} $r_{min}$ that satisfies the following conditions: 1) when $r=r_{min}$ in Algorithm~\ref{ckann}, the number of candidate points in $S$ satisfies $\lvert S \rvert \geq \beta n+k$; 
2) when $r=\frac{r_{min}}{c}$ in Algorithm~\ref{ckann}, the number of candidate points in $S$ satisfies $\lvert S \rvert < \beta n+k$. 
Since DET-LSH can implement dynamic incremental queries as $r$ increases, the choice of $r_{min}$ is expected to have a relatively small impact on its performance.

\subsubsection{PDET-LSH} \label{para_pdet}

Even though PDET-LSH improves the indexing and query efficiency of DET-LSH, it has the same query accuracy as DET-LSH, that is, it returns the same results.
Therefore, we choose for PDET-LSH the same parameters as for DET-LSH; the choice of these parameters is discussed in Section~\ref{para_det}.

\subsection{Complexity Analysis}  \label{chapter5.3}

\subsubsection{DET-LSH}

In the encoding and indexing phases, DET-LSH has time cost $\mathcal{O}(n(d+\log N_r))$, and space cost $\mathcal{O}(n)$.
The time cost comes from four parts: (1) computing hash values for $n$ points, $\mathcal{O}(L \cdot K \cdot n \cdot d)$; 
(2) breakpoints selection, $\mathcal{O}(L \cdot K \cdot n \cdot \log N_r)$; 
(3) dynamic encoding, $\mathcal{O}(L \cdot K \cdot n \cdot \log N_r)$; 
and (4) constructing $L$ DE-Trees, $\mathcal{O}(L \cdot n \cdot K \cdot \log N_r)$. 
Since both $K=\mathcal{O}(1)$ and $L=\mathcal{O}(1)$ are constants, the total time cost is $\mathcal{O}(n(d+\log N_r))$. 
Obviously, the size of encoded points and $L$ DE-Trees are both $\mathcal{O}(L \cdot K \cdot n)=\mathcal{O}(n)$.

In the query phase, DET-LSH has time cost $\mathcal{O}(n(\beta d+\log N_r))$. 
The time cost comes from four parts: 
(1) computing hash values for the query point $q$, $\mathcal{O}(L \cdot K \cdot d)=\mathcal{O}(d)$; 
(2) finding candidate points in $L$ DE-Trees, $\mathcal{O}(L \cdot K^2 \cdot \log N_r \cdot \frac{n}{max\_size} + L \cdot K \cdot n)=\mathcal{O}(n\log N_r)$; 
(3) computing the real distance of each candidate point to $q$, $\mathcal{O}(\beta nd)$ cost; 
and (4) finding the $top$-$k$ points to $q$, $\mathcal{O}(\beta n \log k)$. 
The total time cost in the query phase is $\mathcal{O}(n(\beta d+\log N_r))$.

\subsubsection{PDET-LSH}

In the encoding and indexing phases, PDET-LSH has time cost $\mathcal{O}(\frac{n(d+\log N_r)}{N_w})$, and space cost $\mathcal{O}(nN_w\log^2 N_r)$.
The time cost derives from three components: (1) computing hash values for $n$ points in parallel, with complexity $\mathcal{O}(\frac{L \cdot K \cdot n \cdot d}{N_w})$; 
(2) dynamic encoding in parallel, with complexity $\mathcal{O}(\frac{L \cdot K \cdot n \cdot \log N_r}{N_w})$; 
and (3) constructing $L$ DE-Trees in parallel, with complexity $\mathcal{O}(\frac{L \cdot n \cdot K \cdot \log N_r}{N_w})$. 
Therefore, the total time cost is $\mathcal{O}(\frac{n(d+\log N_r)}{N_w})$. 

The space cost relates to two components:
(1) the encoded points, with complexity $\mathcal{O}(L \cdot K \cdot n)=\mathcal{O}(n)$, and
(2) the $L$ DE-Trees built by $N_w$ workers, with complexity $\mathcal{O}(N_w \cdot L \cdot K \cdot \frac{n}{max\_size} \cdot \log^2 N_r)=\mathcal{O}(nN_w\log^2 N_r)$. 
Thus, the total space cost is $\mathcal{O}(nN_w\log^2 N_r)$.

In the query phase, PDET-LSH has $\mathcal{O}(\frac{n}{N_w}(\beta d+\log N_r))$ time cost, which derives from four components: 
(1) computing hash values for the query point $q$, with complexity $\mathcal{O}(\frac{L \cdot K \cdot d}{N_w})=\mathcal{O}(\frac{d}{N_w})$; 
(2) finding candidate points in the $L$ DE-Trees, with complexity $\mathcal{O}(L \cdot K^2 \cdot \log N_r \cdot \frac{n}{max\_size} \cdot \frac{1}{N_w} + L \cdot K \cdot n \cdot \frac{1}{N_w})=\mathcal{O}(\frac{n\log N_r}{N_w})$; 
(3) computing the real distance of each candidate point to $q$, with complexity $\mathcal{O}(\frac{\beta nd}{N_w})$ cost; 
and (4) finding the $top$-$k$ points to $q$, with complexity $\mathcal{O}(\beta n \log k)$. 
The total time cost of the query phase is thus $\mathcal{O}(\frac{n}{N_w}(\beta d+\log N_r))$.

\begin{table}
\vspace*{-0.3cm}
	\centering
	\caption{Datasets}
	\label{table2}
    	\begin{tabular}{cccc}
		\toprule
		\textbf{Dataset} & \textbf{Cardinality} & \textbf{Dim.} & \textbf{Type} \\
		\midrule
		Msong & 994,185 & 420 & Audio\\
		Deep1M & 1,000,000 & 256 & Image\\
		Sift10M & 10,000,000 & 128 & Image\\
		TinyImages80M & 79.302,017 & 384 & Image\\
		Sift100M & 100,000,000 & 128 & Image\\
		Yandex Deep500M & 500,000,000 & 96 & Image\\
		Microsoft SPACEV500M  & 500,000,000 & 100 & Text\\
		Microsoft Turing-ANNS500M & 500,000,000 & 100 & Text\\
		\bottomrule
	\end{tabular}
\end{table}

\section{Experimental Evaluation} \label{chapter6}

In this section, we evaluate the performance of DET-LSH and PDET-LSH, conduct comparative experiments with the state-of-the-art LSH-based methods.
DET-LSH and PDET-LSH are implemented in C and C++ and compiled using the -O3 optimization.
All LSH-based methods except for PDET-LSH use a single thread. 
We acknowledge that there are no parallel implementations for competitors, but given the increasing of datasets size and importance of LSH-based methods, we describe a parallel solution, which could also instigate other LSH-based solutions to do work in this area.
PDET-LSH uses Pthreads and OpenMP for parallelization and SIMD for accelerating calculations.
All experiments are conducted on a machine with 2 AMD EPYC 9554 CPUs @ 3.10GHz and 756 GB RAM, running on Ubuntu v22.04.

\subsection{Experimental Setup}

\noindent \textbf{Datasets and Queries.} We select eight high-dimensional commonly used datasets for ANN search. 
Table \ref{table2} shows the key statistics of the datasets. Note that points in \textit{Sift10M} and \textit{Sift100M} are randomly chosen from the \textit{Sift1B} dataset\footnote{http://corpus-texmex.irisa.fr/}. 
Similarly, \textit{Yandex Deep500M}, \textit{Microsoft SPACEV500M}, and \textit{Microsoft Turing-ANNS500M} are also randomly sampled from their 1B-scale datasets\footnote{https://big-ann-benchmarks.com/neurips21.html}, which are the largest datasets our server can handle.
We randomly select 100 data points as queries and remove them from the original datasets.

\noindent \textbf{Evaluation Measures.} We adopt five measures to evaluate the performance of all methods: index size, indexing time, query time, recall, and overall ratio.
For a query $q$, we denote the result set as $R=\{o_1,...,o_k\}$ and the exact $k$-NNs as $R^*=\{o_1^*,...,o_k^*\}$, \textit{recall} is defined as $\frac{\lvert R \cap R^* \rvert}{k}$ and \textit{overall ratio} is defined as $\frac{1}{k} \sum_{i=1}^{k} \frac{\left\|q,o_i\right\|}{\left\|q,o_i^*\right\|}$~\cite{dblsh}.
To evaluate the parallel performance of PDET-LSH, we adopt an additional measure: speedup ratio. 
\textit{Speedup ratio} is defined as $S_p=\frac{T_1}{T_p}$, where $p$ is the number of threads used, $T_1$ is the running time of non-parallelized method, and $T_p$ is the running time of parallel method with $p$ threads. 

\noindent \textbf{Benchmark Methods.} We compare DET-LSH and PDET-LSH with five state-of-the-art methods, including three LSH-based methods: DB-LSH~\cite{dblsh}, LCCS-LSH~\cite{lccslsh}, and PM-LSH~\cite{pmlsh},
and two non-LSH-based methods: HNSW~\cite{malkov2018efficient} (graph-based) and IMI-OPQ~\cite{ge2013optimized} (quantization-based).

\begin{figure}[tb]
\vspace*{-0.3cm}
	\centering
	\includegraphics[width=0.8\linewidth]{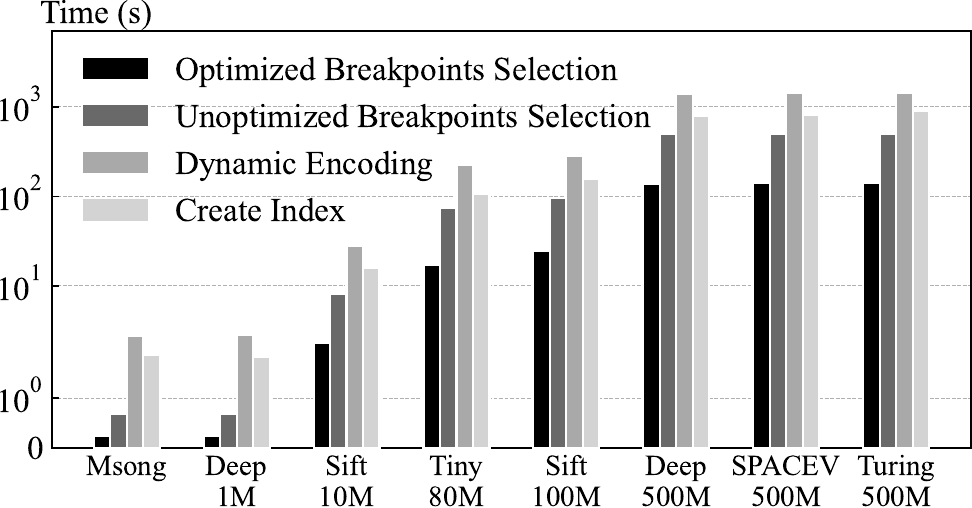}
	\caption{Running time break-down for the DET-LSH encoding and indexing phases.}
	\label{encodingandindexing}
\end{figure}

\begin{figure}[tb] 
	\centering
	\includegraphics[width=0.85\linewidth]{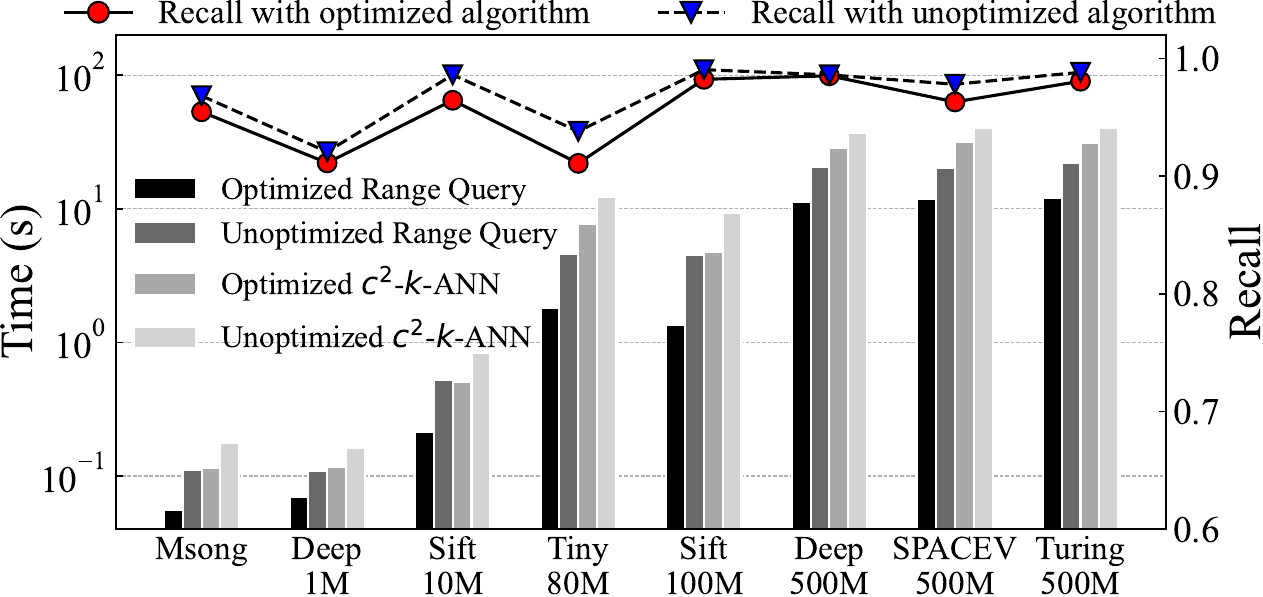}
	\caption{Running time and recall of optimized/non-optimized query-phase algorithms of DET-LSH.}
	\label{querytime}
\end{figure}

\begin{figure*}[tb] 
\vspace*{-0.3cm}
	\centering
	\includegraphics[width=0.85\linewidth]{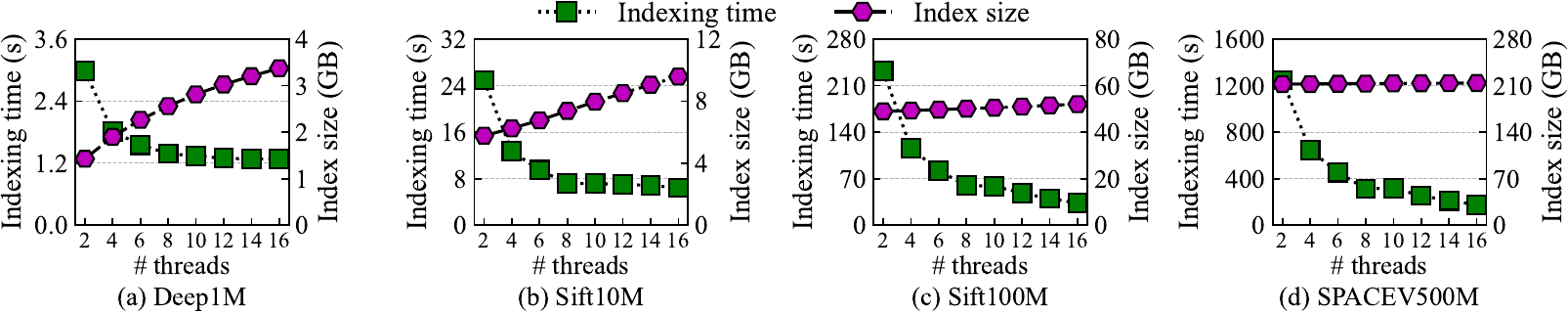}
	\caption{Indexing performance of PDET-LSH when varying the number of threads.}
	\label{diff_threads_indexing}
\end{figure*}

\begin{figure*}[tb] 
	\centering
	\includegraphics[width=0.85\linewidth]{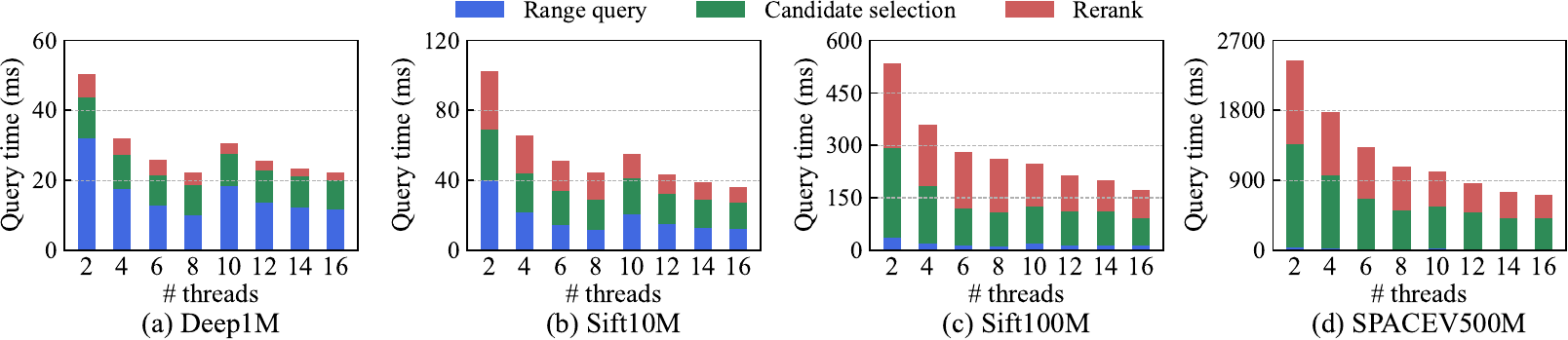}
	\caption{Query performance of PDET-LSH when varying the number of threads.}
	\label{diff_threads_query_bar}
\end{figure*}

\begin{figure}[tb] 
	\centering
	\includegraphics[width=0.85\linewidth]{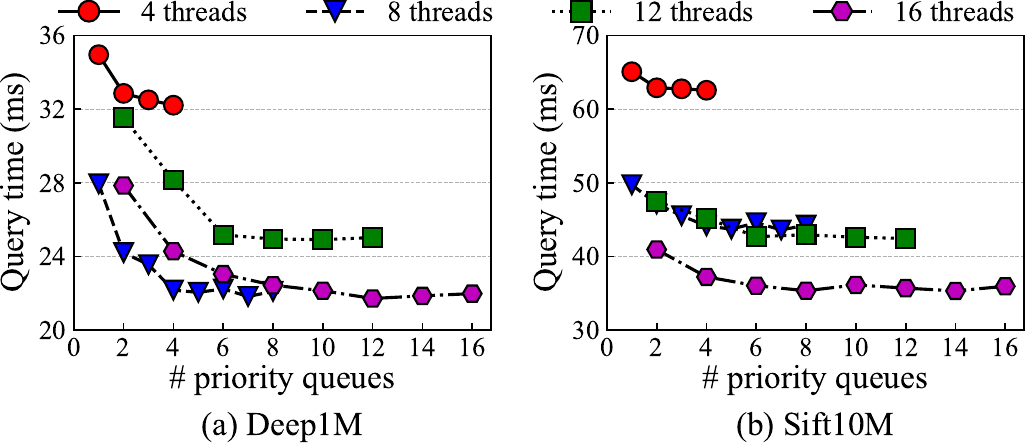}
	\caption{Query performance of PDET-LSH when varying the number of queues.}
	\label{diff_queue}
\end{figure}

\noindent \textbf{Parameter Settings.} $k$ in $k$-ANN is set to 50 by default.
For DET-LSH, the parameters are set as described in Section~\ref{para_det}.
For PDET-LSH, we set $N_w=8$ and $N_q=4$ (based on the experimental results in Section~\ref{pdet_query}).
For competitors, the parameter settings follow their source codes or papers. 
To make a fair comparison, we set $\beta=0.1$ and $c=1.5$ for DET-LSH, DB-LSH, PM-LSH. 
For DB-LSH, $L=5$, $K=12$, $w=4c^2$. 
For LCCS-LSH, $m=64$. For PM-LSH, $s=5$, $m=15$. 
For HNSW, $efConstruction=200$, $M=25$.
For IMI-OPQ, $M=2$, $K_{coarse}=2^{16}$, $K_{fine}=2^{8+16}$.

\subsection{Self-evaluation of DET-LSH} \label{selfevaluation}

\subsubsection{Encoding and Indexing Phase} \label{encodeoptimize}

Figure~\ref{encodingandindexing} details the algorithm runtimes for encoding and indexing. We found that dynamic encoding is slower than index creation, as locating a point's region per dimension among 256 options remains costly despite binary search optimization. 
Moreover, the optimized breakpoints selection (Section~\ref{Encoding Phase}), using QuickSelect with a divide-and-conquer strategy, reduces time complexity from $\mathcal{O}(n\log n)$ to $\mathcal{O}(n\log N_r)$, achieving a 3x speedup over the unoptimized version.

\subsubsection{Query Phase} \label{queryoptimize}

In practice, computing distances for all points in a leaf node is expensive when its upper bound distance to $q^\prime$ exceeds the search radius. 
Empirically, with a properly chosen leaf size in Algorithm~\ref{create_index}, most points in such leaf nodes lie within the search radius. 
Accordingly, we optimize subtree traversal in two ways: 
(1) we relax candidate selection by adding all points of a leaf node to $S$ whenever its lower bound distance to $q^\prime$ does not exceed $r$; 
(2) we maintain a priority queue of visited leaf nodes ordered by their lower bound distances to $q^\prime$, so that nodes closer to $q^\prime$ contribute candidates earlier. 
As shown in Figure~\ref{querytime}, with a modest loss in accuracy, the optimized Algorithm~\ref{range_query} and Algorithm~\ref{ckann} reduce query time by up to 50\% and 30\%, respectively.

\subsection{Self-evaluation of PDET-LSH} \label{selfevaluation_pdet}

\subsubsection{Encoding and Indexing Phase}

Since the encoding and indexing phases are tightly coupled, we report encoding performance as part of the indexing cost in subsequent experiments. 
Figure~\ref{diff_threads_indexing} presents the indexing performance of PDET-LSH with varying numbers of threads, while Figure~\ref{diff_threads_ratio}(a) further reports the corresponding speedup over DET-LSH. 
Overall, indexing time decreases substantially as the number of threads increases from 2 to 8, but exhibits diminishing returns from 8 to 16 due to intensified resource contention. 
Meanwhile, index size grows with the number of threads, especially on smaller datasets (1M and 10M), because per-thread index allocation is used to avoid contention; this growth becomes less pronounced on larger datasets (100M and 500M), where data-dependent index space dominates. 
Considering both indexing time and index size, using 8 threads achieves a favorable trade-off across all datasets.

\begin{figure}[tb] 
	\centering
	\includegraphics[width=0.85\linewidth]{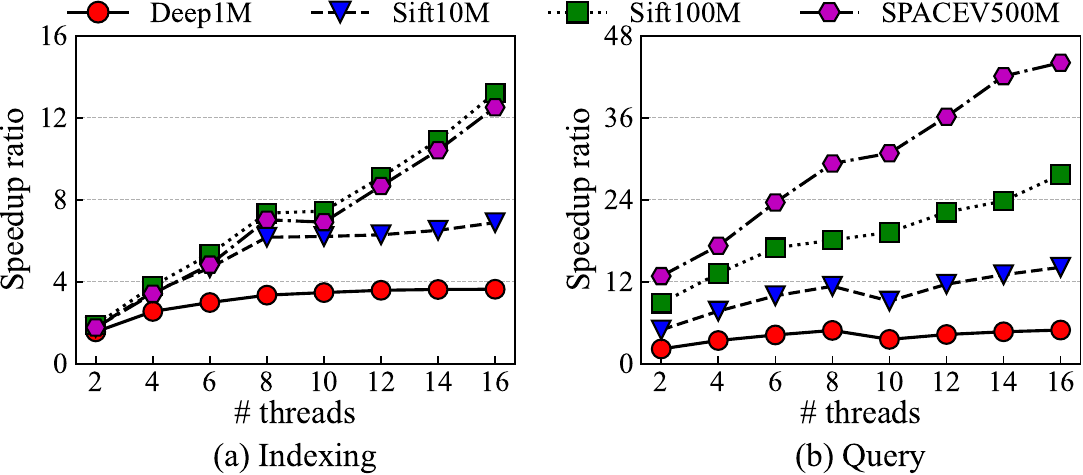}
	\caption{Parallel performance of PDET-LSH's indexing and query process when varying the number of threads.}
	\label{diff_threads_ratio}
\end{figure}

\begin{figure*}[tb]
\vspace*{-0.3cm}
	\centering
	\includegraphics[width=0.85\linewidth]{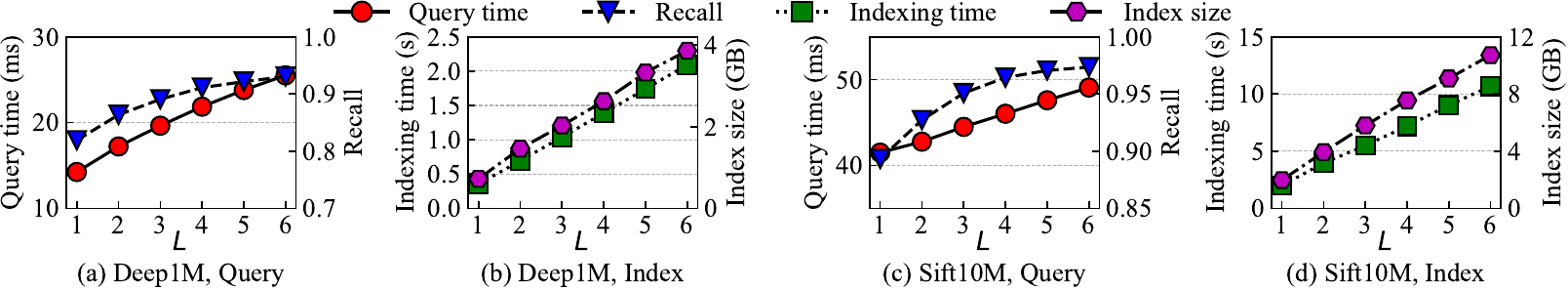}
	\caption{Performance of PDET-LSH under different number of projected spaces ($L$).}
	\label{diff_l}
\end{figure*}

\begin{figure*}[tb] 
	\centering
	\includegraphics[width=0.85\linewidth]{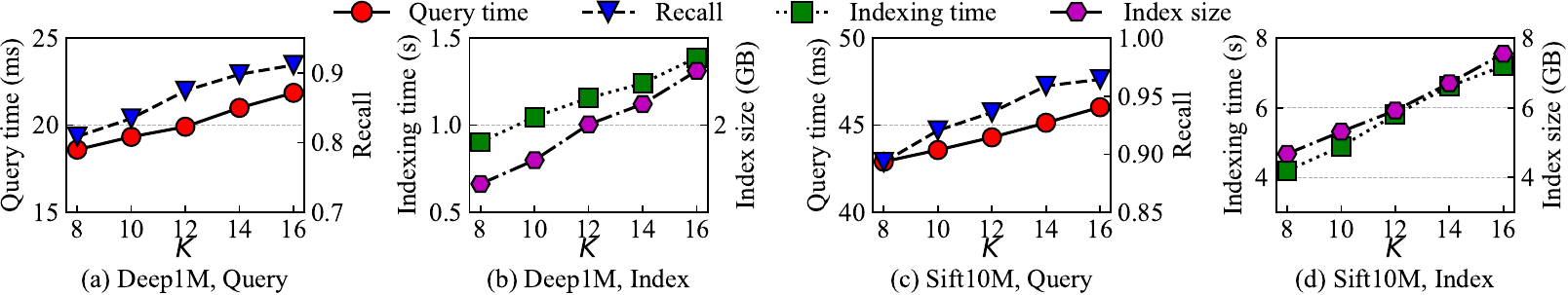}
	\caption{Performance of PDET-LSH under different dimension of projected spaces ($K$).}
	\label{diff_k}
\end{figure*}

\subsubsection{Query Phase} \label{pdet_query}

To study the effect of the number of queues $N_q$ on query performance, we evaluate PDET-LSH on the \textit{Deep1M} and \textit{Sift10M} datasets, as shown in Figure~\ref{diff_queue}. 
We observe that query efficiency achieves the best when $N_q=\frac{N_w}{2}$, i.e., the number of queues is half the number of threads. 
Increasing $N_q$ beyond this point yields negligible performance gains while introducing additional queue-management overhead. 
Accordingly, in all subsequent experiments, we fix $N_q=\frac{N_w}{2}$ regardless of the number of threads.

Since PDET-LSH produces identical results to DET-LSH, both methods achieve the same query accuracy; hence, accuracy results are omitted in the remainder of this section for brevity.
Figure~\ref{diff_threads_query_bar} illustrates the per-step query performance of $c^2$-$k$-ANN in PDET-LSH under different numbers of threads.
Figure~\ref{diff_threads_ratio}(b) reports the query speedup of PDET-LSH over DET-LSH.
We found that as dataset size grows, the relative cost of the range query diminishes, while candidate generation and $k$-NN refinement dominate due to the need to process $\beta n + k$ candidates. 
On \textit{Deep1M} and \textit{Sift10M}, query efficiency peaks at 8 threads, whereas larger datasets (\textit{Sift100M} and \textit{SPACEV500M}) benefit from additional threads.
This is because the AMD EPYC Processor we use has 8 cores per CCX (CPU Complex) and shares an L3 cache~\cite{amd}, where cross-CCX data synchronization needs to go through memory, so the data processing speed will decrease when the number of cores increases from 8 to 10. 
Moreover, PDET-LSH exhibits super-linear speedup ($S_p > p$) owing to parallel-aware query optimization and SIMD-accelerated distance computations, an effect that becomes more pronounced on large datasets. 
Overall, 8 threads provide a robust and effective configuration across datasets of different scales.

\begin{figure} [t!]
	\centering
	\subfigcapskip=5pt
	\subfigure[Datasets of different sizes.]{
		\includegraphics[width=0.46\linewidth]{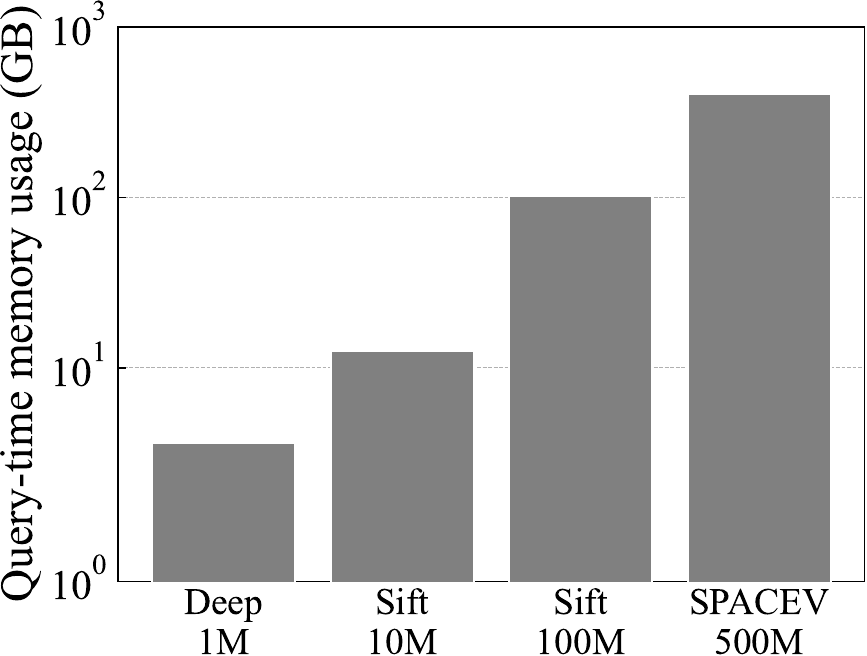}
		\label{query_memory}}
	\subfigure[Scalability on Sift dataset.]{
		\includegraphics[width=0.46\linewidth]{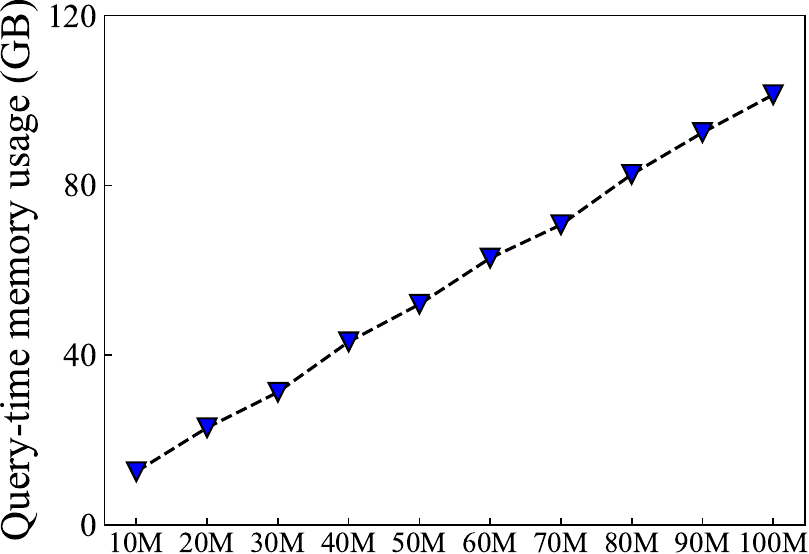}
		\label{query_memory_scalability}}
	\caption{Query-time memory overhead of PDET-LSH.}
	\label{query_memory_overhead}
\end{figure}

\begin{figure}[tb] 
	\centering
	\includegraphics[width=0.8\linewidth]{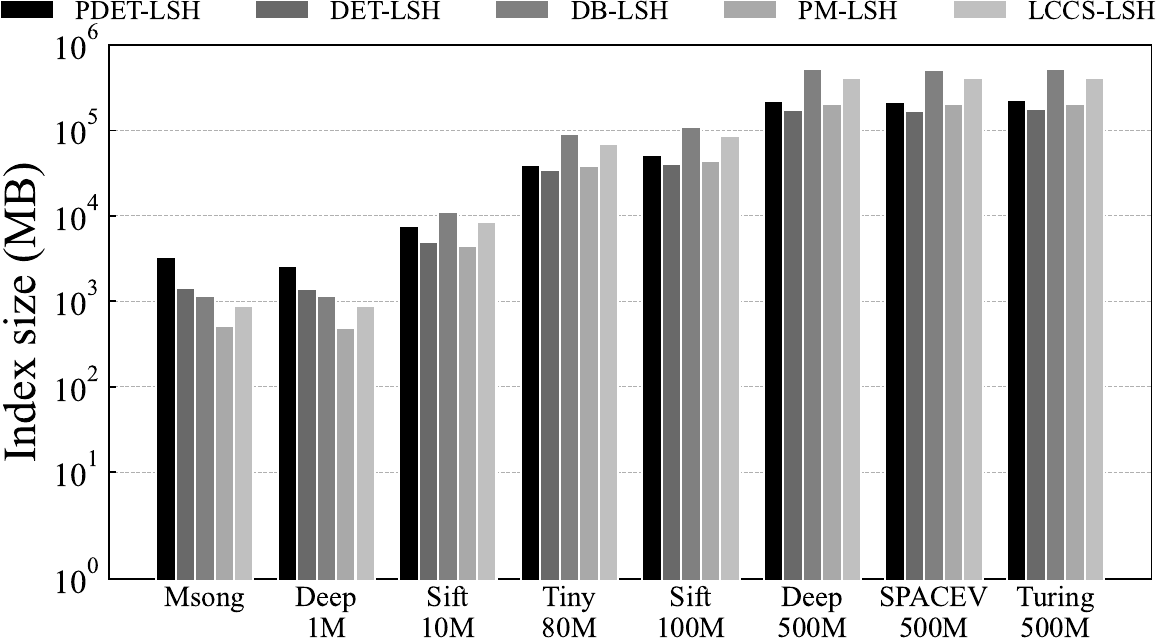}
	\caption{Index size.}
	\label{indexsize}
\end{figure}

\begin{figure}[tb] 
	\centering
	\includegraphics[width=0.8\linewidth]{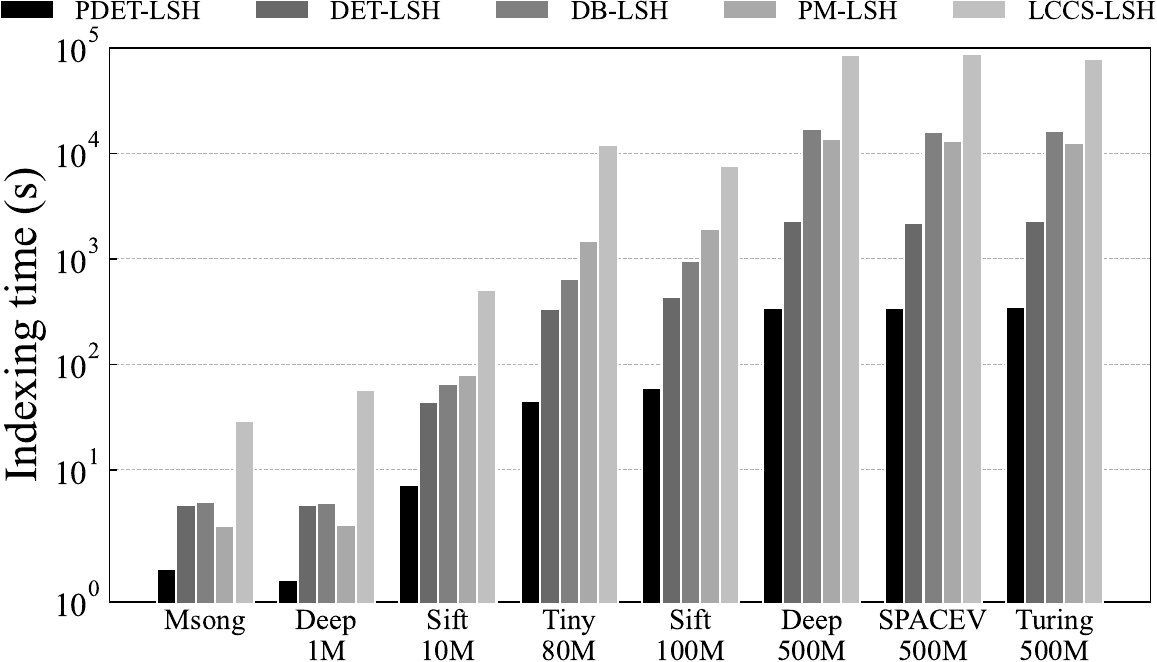}
	\caption{Indexing time.}
	\label{indexingtime}
\end{figure}

\subsubsection{Parameter study on $L$ and $K$} \label{para_study}
Figure~\ref{diff_l} and Figure~\ref{diff_k} show the performance of PDET-LSH under different $L$ and $K$, respectively.
We found that increasing $L$ and $K$ improves query accuracy, as it increases the accuracy of the LSH projections; however, this comes at the cost of reduced query efficiency, higher indexing overhead, and larger index size.
When $K=4$ and $L=16$, DET-LSH achieves a good trade-off between performance and overhead on multiple datasets.
In practice, users can dynamically choose appropriate parameter settings based on the specific application scenario and available resources.

\subsubsection{Query-time memory overhead} \label{memory_overhead}
Figure~\ref{query_memory_overhead} shows the query-time memory consumption of PDET-LSH, including the dataset size (as the dataset needs to be kept in memory for reranking). 
Specifically, Figure~\ref{query_memory} provides the memory usage of PDET-LSH on datasets of different sizes and dimensions during query processing.
Figure~\ref{query_memory_scalability} measures scalability, that is the query-time memory overhead of PDET-LSH when varying scales of the same dataset (Sift).
In practice, users can adaptively tune the parameters based on available memory resources. As illustrated in Figures~\ref{diff_l} and~\ref{diff_k}, larger $L$ or $K$ improve query accuracy at the expense of higher memory overhead, and vice versa.

\begin{figure*}[tb] 
\vspace*{-0.3cm}
	\centering
	\includegraphics[width=0.72\linewidth]{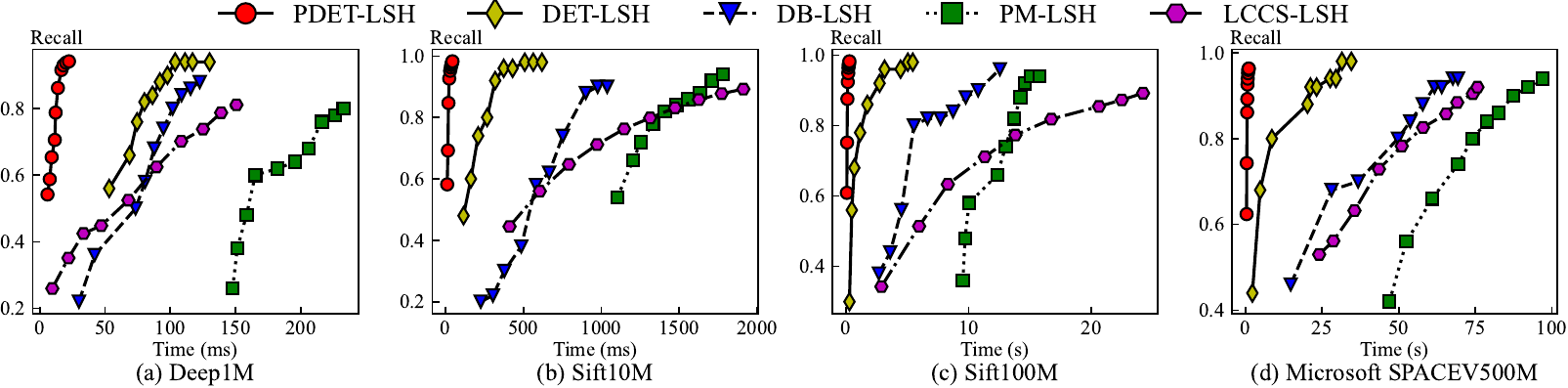}
    \vspace*{-0.1cm}
	\caption{Recall-time curves.}
	\label{recalltime}
\end{figure*}

\begin{figure*}[tb] 
	\centering
	\includegraphics[width=0.72\linewidth]{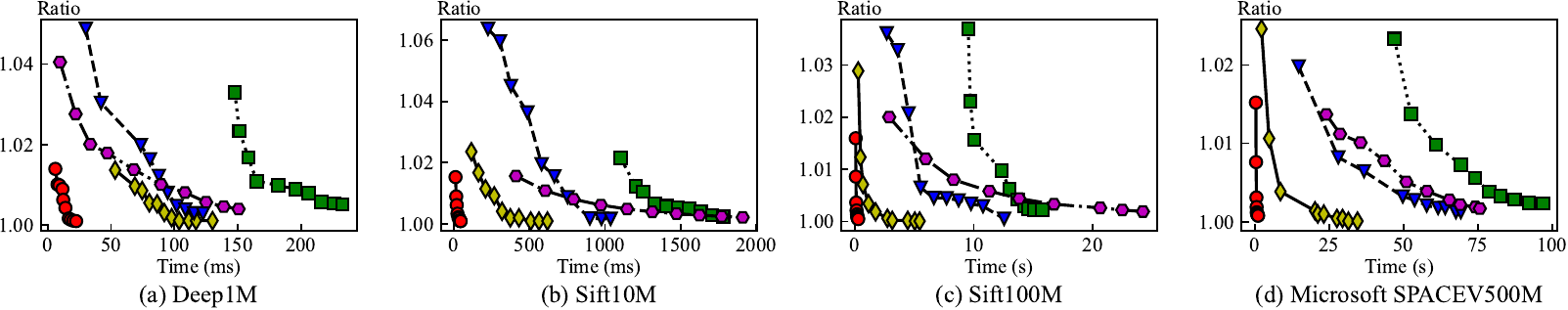}
    \vspace*{-0.1cm}
	\caption{Overall ratio-time curves.}
	\label{ratiotime}
\end{figure*}
    
\begin{figure*}[tb] 
	\centering
	\includegraphics[width=0.72\linewidth]{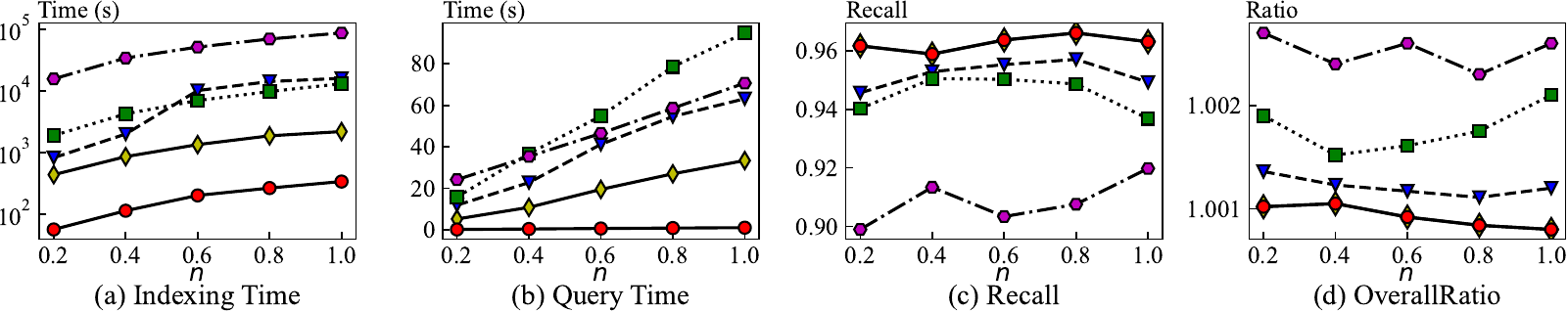}
    \vspace*{-0.1cm}
	\caption{Scalability: performance under different $n$ on Microsoft SPACEV500M.}
	\label{scalability}
\end{figure*}

\begin{figure*}[tb] 
	\centering
	\includegraphics[width=0.72\linewidth]{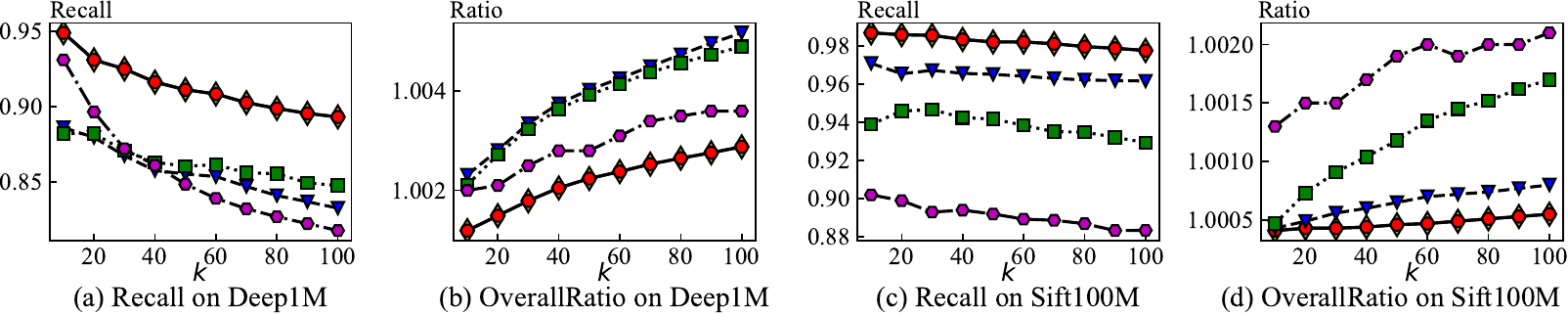}
    \vspace*{-0.1cm}
	\caption{Performance under different $k$.}
	\label{diffk}
\end{figure*}

\subsection{Comparison with LSH-based Methods} \label{comparison}

\subsubsection{Indexing Performance} 

Figure~\ref{indexsize} and Figure~\ref{indexingtime} summarize the indexing performance on all datasets, where the encoding time is included for both DET-LSH and PDET-LSH. 
Among single-threaded LSH-based methods, DET-LSH achieves the highest indexing efficiency, as DB-LSH and PM-LSH rely on costly data-oriented partitioning of the projected space, whereas DET-LSH constructs DE-Trees by independently dividing and encoding each projected dimension; in contrast, LCCS-LSH incurs substantial overhead due to building the Circular Shift Array (CSA). 
The advantage of DET-LSH grows with dataset cardinality: while it is slower than PM-LSH for $n\!\le\!1$M due to constructing four DE-Trees, it attains $2\times$--$6\times$ indexing speedups as $n$ increases from 10M to 500M, reflecting the linear construction cost of DE-Trees. 
As the only parallel method, PDET-LSH further provides significant gains, achieving up to $40\times$ indexing speedup on large datasets. 
Regarding index size, DET-LSH is less competitive on small datasets but becomes advantageous at scale since it stores only one-byte iSAX representations per point, although building four DE-Trees limits this benefit at small scale; PDET-LSH incurs additional space overhead due to per-thread index allocation.

\subsubsection{Query Performance}
We evaluate query performance using the Recall--Time and OverallRatio--Time curves in Figures~\ref{recalltime} and \ref{ratiotime}. 
DET-LSH consistently outperforms all single-threaded LSH-based methods in both efficiency and accuracy, achieving up to $2\times$ query speedup as $n$ increases, owing to the fact that nearby points share similar DE-Tree encodings and thus yield higher-quality candidates via range queries. 
Moreover, DET-LSH offers the best efficiency--accuracy trade-off among single-threaded methods, requiring the least query time to reach the same recall or overall ratio. 
PDET-LSH preserves exactly the same query accuracy as DET-LSH while outperforming all competitors, since its parallel query optimizations do not alter the search results. 
As the only parallel approach, PDET-LSH further delivers substantial efficiency gains, achieving up to $62\times$ query speedup on large datasets over existing LSH-based methods.

\subsubsection{Scalability}
A method has good scalability if it performs well on datasets of different cardinalities.
To evaluate scalability, we randomly sample subsets of varying cardinalities from the \textit{Microsoft SPACEV500M} dataset and compare the indexing and query performance of all methods under default settings, as shown in Figure~\ref{scalability}. 
While indexing and query times increase with dataset size for all methods, DET-LSH exhibits much slower growth than other single-threaded LSH-based approaches due to the efficiency of DE-Trees (Figures~\ref{scalability}(a) and \ref{scalability}(b)). 
PDET-LSH consistently maintains—and further amplifies—its advantages in both indexing and query efficiency as the dataset size grows. 
Meanwhile, recall and overall ratio remain stable across methods, since random sampling preserves the data distribution. 
Overall, DET-LSH and PDET-LSH exhibit better scalability than other methods.

\subsubsection{Effect of $k$} 
To study the impact of $k$, we evaluate all methods under varying $k$ and report only recall and overall ratio, since $k$ has negligible effect on query time and no effect on indexing time (Figure~\ref{diffk}). 
As $k$ increases, query accuracy of all methods slightly degrades because the number of candidate points remains fixed, increasing the likelihood of missing exact nearest neighbors. 
Across all settings, DET-LSH and PDET-LSH consistently achieve the best performance among all competitors.

\begin{figure}[tb] 
\vspace*{-0.3cm}
	\centering
	\includegraphics[width=0.9\linewidth]{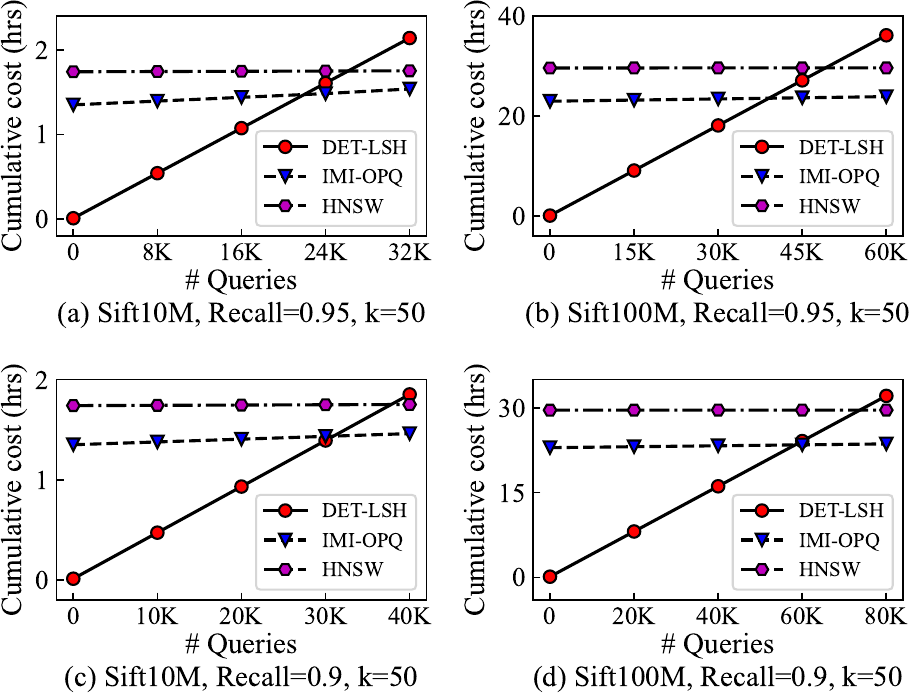}
	\caption{Cumulative query cost without parallel (start with the indexing time).}
	\label{nonlsh}
\end{figure}

\subsection{Comparison with Non-LSH-based Methods}

In this section, we compare DET-LSH and PDET-LSH with HNSW~\cite{malkov2018efficient} (graph-based method) and IMI-OPQ~\cite{ge2013optimized} (quantization-based method). 
Nevertheless, LSH-based, graph-based, and quantization-based methods have different design principles and characteristics~\cite{hydra2,li2019approximate,DBLP:journals/debu/00070P023}, making them suitable to different application scenarios. 
LSH is particularly suitable for scenarios where worst-case guarantees dominate average-case performance~\cite{indyk1998approximate,andoni2008near}, while graph and quantization are optimized for average-case efficiency under relatively stable data and query distributions~\cite{dong2011efficient}.
In particular, graph-based and quantization-based methods only support ng-approximate answers~\cite{hydra2}, that is, they do not provide any quality guarantees on their results.
It is important to emphasize that DET-LSH and PDET-LSH have to pay the cost of providing guarantees for their answers; graph-based and quantization-based methods, that do not provide any guarantees, do not have to pay this cost. 

\begin{figure}[tb] 
\vspace*{-0.3cm}
	\centering
	\includegraphics[width=0.9\linewidth]{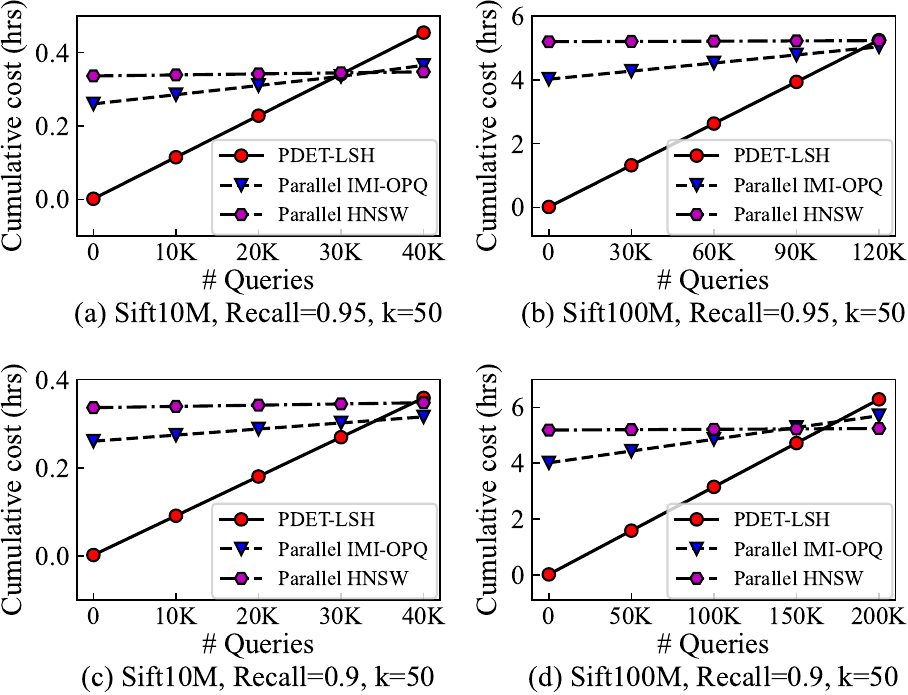}
	\caption{Cumulative query cost in parallel (start with the indexing time).} 
	\label{nonlsh_parallel}
\end{figure}

Given the considerable differences in both indexing and query efficiency across different types of ANN methods, a holistic evaluation that jointly considers indexing and query performance is necessary.
Figure~\ref{nonlsh} and Figure~\ref{nonlsh_parallel} show the cumulative query costs of all methods under the same recall, where the cost starts with the indexing time.
Experimental results demonstrate clear advantages of DET-LSH and PDET-LSH: under the same hardware configurations (single-threaded or parallel), DET-LSH/PDET-LSH completes index construction and serves up to 60K-160K queries before the best competitor answers its first query, highlighting their suitability for scenarios requiring rapid deployment and immediate query responses, such as LLM inference acceleration~\cite{chenmagicpig}.

\section{Related Work}

With the rapid growth of data volume~\cite{fu2024securing,fu2023ti,fei2023flexauth,zhang2025polaris,zhuang2026structured}, efficient management and analysis of large-scale datasets have become increasingly critical~\cite{wei2023data,wei2025dominate,li2024disauth,zhuang2024cbcms,wei2026virtuouscycleaipoweredvector,song2026csattention,tacosigmod26}. 
Approximate Nearest Neighbor (ANN) search in high-dimensional Euclidean spaces is a fundamental problem with broad applications, for which locality-sensitive hashing (LSH) methods are particularly attractive due to their strong theoretical accuracy guarantees in $c$-ANN search~\cite{e2lsh,dblsh,c2lsh,qalsh,r2lsh,vhp,lccslsh,srs,pmlsh,detlsh}. 
Following prior work, existing LSH-based approaches can be categorized into three classes: boundary-constraint (BC)-based methods~\cite{e2lsh,lsbforest,sklsh,dblsh}, collision-counting (C2)-based methods~\cite{c2lsh,qalsh,r2lsh,vhp,lccslsh}, and distance-metric (DM)-based methods~\cite{srs,pmlsh}.

Boundary-constraint (BC) methods employ $K\!\cdot\!L$ hash functions to map data points into $L$ independent $K$-dimensional projected spaces, where each point is assigned to a hash bucket defined by a $K$-dimensional hypercube; two points are considered colliding if they share a bucket in at least one of the $L$ hash tables. 
E2LSH~\cite{e2lsh}, the seminal BC approach, uses $p$-stable LSH functions~\cite{datar2004locality} but requires constructing additional hash tables as the search radius $r$ increases, resulting in prohibitive index space overhead. 
To mitigate this issue, LSB-Forest~\cite{lsbforest} indexes projected points using B-Trees, while SK-LSH~\cite{sklsh} adopts a B$^+$-Tree-based structure to improve candidate quality with lower I/O cost; however, both methods rely on heuristics and lack LSH-style theoretical guarantees. 
DB-LSH~\cite{dblsh} represents the state of the art in BC methods with rigorous guarantees, proposing a dynamic search framework built upon R$^*$-Trees.

Collision-counting (C2) methods use $K'\!\cdot\!L'$ hash functions to build $L'$ independent $K'$-dimensional hash tables, with $K' < K$ and $L' > L$, and select candidates whose collision counts exceed a threshold $t < L'$. 
C2LSH~\cite{c2lsh} introduces this paradigm by maintaining only $K'$ one-dimensional hash tables ($K'=1$) and employing \emph{virtual rehashing} to dynamically count collisions, thereby reducing index space. 
QALSH~\cite{qalsh} further improves efficiency by indexing projected points with B$^+$-Trees to avoid explicit per-dimension collision counting. 
To reduce space consumption, R2LSH~\cite{r2lsh} and VHP~\cite{vhp} extend QALSH by mapping points into multiple two-dimensional projected spaces ($K'=2$) and modeling hash buckets as virtual hyperspheres ($K'>2$), respectively. 
Finally, LCCS-LSH~\cite{lccslsh} generalizes collision counting from discrete point occurrences to the lengths of continuous co-substrings within a unified search framework.

Distance-metric (DM) methods rely on the intuition that points close to a query $q$ in the original space remain close in the projected space, and thus use $K$ hash functions to embed data into a $K$-dimensional space. 
SRS~\cite{srs} indexes projected points with an R-tree and performs exact NN search in the projected space, while PM-LSH~\cite{pmlsh} employs a PM-Tree~\cite{pmtree}-based range query strategy to improve efficiency. 
In PM-LSH, $\beta n + k$ candidates are selected according to projected-space distances, where $\beta$ is a tunable ratio ensuring search quality and $n$ denotes the dataset cardinality.

\section{Conclusions} \label{chapter7}

In this paper, we proposed DET-LSH, a novel LSH-based approach for efficient and accurate $c$-ANN query processing in high-dimensional spaces with probabilistic guarantees. 
By combining the strengths of BC and DM methods, DET-LSH supports Euclidean distance-based range queries through multiple index trees and improves query accuracy. 
We also designed DE-Tree, a dynamic encoding-based index structure that enables efficient range query processing and outperforms existing data-oriented partitioning trees, especially on large-scale datasets. 
Furthermore, we developed PDET-LSH, an in-memory parallel extension that leverages multicore CPUs to accelerate index construction and query answering. 
Experimental results verify the superiority of DET-LSH and PDET-LSH over state-of-the-art LSH-based methods in both efficiency and accuracy.

\bibliographystyle{IEEEtran}

\bibliography{IEEEabrv,ref}

\begin{thebibliography}{10}
\providecommand{\url}[1]{#1}
\csname url@samestyle\endcsname
\providecommand{\newblock}{\relax}
\providecommand{\bibinfo}[2]{#2}
\providecommand{\BIBentrySTDinterwordspacing}{\spaceskip=0pt\relax}
\providecommand{\BIBentryALTinterwordstretchfactor}{4}
\providecommand{\BIBentryALTinterwordspacing}{\spaceskip=\fontdimen2\font plus
\BIBentryALTinterwordstretchfactor\fontdimen3\font minus \fontdimen4\font\relax}
\providecommand{\BIBforeignlanguage}[2]{{%
\expandafter\ifx\csname l@#1\endcsname\relax
\typeout{** WARNING: IEEEtran.bst: No hyphenation pattern has been}%
\typeout{** loaded for the language `#1'. Using the pattern for}%
\typeout{** the default language instead.}%
\else
\language=\csname l@#1\endcsname
\fi
#2}}
\providecommand{\BIBdecl}{\relax}
\BIBdecl

\bibitem{weber1998quantitative}
R.~Weber, H.-J. Schek, and S.~Blott, ``A quantitative analysis and performance study for similarity-search methods in high-dimensional spaces,'' in \emph{VLDB}, 1998.

\bibitem{hydra2}
K.~Echihabi, K.~Zoumpatianos, T.~Palpanas, and H.~Benbrahim, ``Return of the lernaean hydra: Experimental evaluation of data series approximate similarity search,'' \emph{VLDB}, 2019.

\bibitem{DBLP:journals/debu/00070P023}
Z.~Wang, P.~Wang, T.~Palpanas, and W.~Wang, ``Graph- and tree-based indexes for high-dimensional vector similarity search: Analyses, comparisons, and future directions,'' \emph{{IEEE} Data Eng. Bull.}, 2023.

\bibitem{suco}
J.~Wei, X.~Lee, Z.~Liao, T.~Palpanas, and B.~Peng, ``Subspace collision: An efficient and accurate framework for high-dimensional approximate nearest neighbor search,'' \emph{SIGMOD}, 2025.

\bibitem{iliassigmod25}
I.~Azizi, K.~Echihabi, and T.~Palpanas, ``Graph-based vector search: An experimental evaluation of the state-of-the-art,'' \emph{{SIGMOD}}, 2025.

\bibitem{leafi}
Q.~Wang, I.~Ileana, and T.~Palpanas, ``Leafi: Data series indexes on steroids with learned filters,'' \emph{SIGMOD}, 2025.

\bibitem{dblsh}
Y.~Tian, X.~Zhao, and X.~Zhou, ``{DB-LSH 2.0: Locality-Sensitive Hashing With Query-Based Dynamic Bucketing},'' \emph{IEEE TKDE}, 2023.

\bibitem{lccslsh}
Y.~Lei, Q.~Huang, M.~Kankanhalli, and A.~K. Tung, ``Locality-sensitive hashing scheme based on longest circular co-substring,'' in \emph{SIGMOD}, 2020.

\bibitem{pmlsh}
B.~Zheng, Z.~Xi, L.~Weng, N.~Q.~V. Hung, H.~Liu, and C.~S. Jensen, ``{PM-LSH: A fast and accurate LSH framework for high-dimensional approximate NN search},'' \emph{VLDB}, 2020.

\bibitem{detlsh}
J.~Wei, B.~Peng, X.~Lee, and T.~Palpanas, ``Det-lsh: A locality-sensitive hashing scheme with dynamic encoding tree for approximate nearest neighbor search,'' \emph{VLDB}, 2024.

\bibitem{indyk1998approximate}
P.~Indyk and R.~Motwani, ``Approximate nearest neighbors: towards removing the curse of dimensionality,'' in \emph{ACM STOC}, 1998.

\bibitem{andoni2008near}
A.~Andoni and P.~Indyk, ``Near-optimal hashing algorithms for approximate nearest neighbor in high dimensions,'' \emph{Communications of the ACM}, vol.~51, no.~1, pp. 117--122, 2008.

\bibitem{dong2011efficient}
W.~Dong, C.~Moses, and K.~Li, ``Efficient k-nearest neighbor graph construction for generic similarity measures,'' in \emph{WWW}, 2011.

\bibitem{kraska2018case}
T.~Kraska, A.~Beutel, E.~H. Chi, J.~Dean, and N.~Polyzotis, ``The case for learned index structures,'' in \emph{SIGMOD}, 2018.

\bibitem{andoni2015practical}
A.~Andoni, P.~Indyk, T.~Laarhoven, I.~Razenshteyn, and L.~Schmidt, ``Practical and optimal lsh for angular distance,'' \emph{NeurIPS}, 2015.

\bibitem{chenmagicpig}
Z.~Chen, R.~Sadhukhan, Z.~Ye, Y.~Zhou, J.~Zhang, N.~Nolte, Y.~Tian, M.~Douze, L.~Bottou, Z.~Jia \emph{et~al.}, ``Magicpig: Lsh sampling for efficient llm generation,'' in \emph{ICLR}, 2025.

\bibitem{gionis1999similarity}
A.~Gionis, P.~Indyk, R.~Motwani \emph{et~al.}, ``Similarity search in high dimensions via hashing,'' in \emph{VLDB}, 1999.

\bibitem{datar2004locality}
M.~Datar, N.~Immorlica, P.~Indyk, and V.~S. Mirrokni, ``Locality-sensitive hashing scheme based on p-stable distributions,'' in \emph{Proceedings of the twentieth annual symposium on Computational geometry}, 2004.

\bibitem{e2lsh}
A.~Andoni, ``{LSH Algorithm and Implementation (E2LSH)},'' 2005.

\bibitem{lsbforest}
Y.~Tao, K.~Yi, C.~Sheng, and P.~Kalnis, ``Quality and efficiency in high dimensional nearest neighbor search,'' in \emph{ACM SIGMOD}, 2009.

\bibitem{sklsh}
Y.~Liu, J.~Cui, Z.~Huang, H.~Li, and H.~T. Shen, ``{SK-LSH: an efficient index structure for approximate nearest neighbor search},'' \emph{VLDB}, 2014.

\bibitem{c2lsh}
J.~Gan, J.~Feng, Q.~Fang, and W.~Ng, ``Locality-sensitive hashing scheme based on dynamic collision counting,'' in \emph{ACM SIGMOD}, 2012.

\bibitem{qalsh}
Q.~Huang, J.~Feng, Y.~Zhang, Q.~Fang, and W.~Ng, ``Query-aware locality-sensitive hashing for approximate nearest neighbor search,'' \emph{VLDB}, vol.~9, no.~1, pp. 1--12, 2015.

\bibitem{r2lsh}
K.~Lu and M.~Kudo, ``{R2LSH: A nearest neighbor search scheme based on two-dimensional projected spaces},'' in \emph{ICDE}.\hskip 1em plus 0.5em minus 0.4em\relax IEEE, 2020.

\bibitem{vhp}
K.~Lu, H.~Wang, W.~Wang, and M.~Kudo, ``{VHP: approximate nearest neighbor search via virtual hypersphere partitioning},'' \emph{VLDB}, 2020.

\bibitem{srs}
Y.~Sun, W.~Wang, J.~Qin, Y.~Zhang, and X.~Lin, ``{SRS: solving c-approximate nearest neighbor queries in high dimensional euclidean space with a tiny index},'' \emph{Proceedings of the VLDB Endowment}, 2014.

\bibitem{wei2023data}
J.~Wei, Y.~Li, Y.~Fu, Y.~Zhang, and X.~Li, ``Data interoperating architecture (dia): Decoupling data and applications to give back your data ownership,'' in \emph{COMPSAC}.\hskip 1em plus 0.5em minus 0.4em\relax IEEE, 2023.

\bibitem{wei2025dominate}
J.~Wei, X.~Lee, Y.~Fu, Y.~Li, and B.~Peng, ``Dominate data by yourself: a decentralized scheme for data interoperation when data is decoupled from applications,'' \emph{World Wide Web}, vol.~28, no.~3, pp. 1--39, 2025.

\bibitem{rstartree}
N.~Beckmann, H.-P. Kriegel, R.~Schneider, and B.~Seeger, ``{The R*-tree: An efficient and robust access method for points and rectangles},'' in \emph{ACM SIGMOD}, 1990.

\bibitem{pmtree}
T.~Skopal, J.~Pokorn{\`y}, and V.~Sn{\'a}{\v{s}}el, ``{Nearest Neighbours Search using the PM-tree},'' in \emph{DASFAA}.\hskip 1em plus 0.5em minus 0.4em\relax Springer, 2005.

\bibitem{rtree}
A.~Guttman, ``{R-trees: A dynamic index structure for spatial searching},'' in \emph{ACM SIGMOD}, 1984.

\bibitem{mtree}
P.~Ciaccia, M.~Patella, P.~Zezula \emph{et~al.}, ``{M-tree: An efficient access method for similarity search in metric spaces},'' in \emph{Vldb}, 1997.

\bibitem{messi}
B.~Peng, P.~Fatourou, and T.~Palpanas, ``{Messi: In-memory data series indexing},'' in \emph{ICDE}.\hskip 1em plus 0.5em minus 0.4em\relax IEEE, 2020.

\bibitem{paris+}
------, ``Paris+: Data series indexing on multi-core architectures,'' \emph{TKDE}, 2020.

\bibitem{sing}
------, ``{SING: Sequence Indexing Using GPUs},'' in \emph{ICDE}.\hskip 1em plus 0.5em minus 0.4em\relax IEEE, 2021.

\bibitem{echihabi2022hercules}
K.~Echihabi, P.~Fatourou, K.~Zoumpatianos, T.~Palpanas, and H.~Benbrahim, ``Hercules against data series similarity search,'' \emph{VLDB}, 2022.

\bibitem{dumpyos}
Z.~Wang, Q.~Wang, P.~Wang, T.~Palpanas, and W.~Wang, ``Dumpyos: {A} data-adaptive multi-ary index for scalable data series similarity search,'' \emph{{VLDB} J.}, 2024.

\bibitem{azizi2023elpis}
I.~Azizi, K.~Echihabi, and T.~Palpanas, ``Elpis: Graph-based similarity search for scalable data science,'' \emph{VLDB}, 2023.

\bibitem{chatzakis2023odyssey}
M.~Chatzakis, P.~Fatourou, E.~Kosmas, T.~Palpanas, and B.~Peng, ``Odyssey: A journey in the land of distributed data series similarity search,'' \emph{VLDB}, 2023.

\bibitem{fatourou2023fresh}
P.~Fatourou, E.~Kosmas, T.~Palpanas, and G.~Paterakis, ``Fresh: {A} lock-free data series index,'' in \emph{SRDS}, 2023.

\bibitem{zolotarev1986one}
V.~M. Zolotarev, \emph{One-dimensional stable distributions}.\hskip 1em plus 0.5em minus 0.4em\relax American Mathematical Soc., 1986, vol.~65.

\bibitem{isax}
J.~Shieh and E.~Keogh, ``{iSAX: indexing and mining terabyte sized time series},'' in \emph{ACM SIGKDD}, 2008, pp. 623--631.

\bibitem{iSAX2}
A.~Camerra, J.~Shieh, T.~Palpanas, T.~Rakthanmanon, and E.~Keogh, ``Beyond one billion time series: indexing and mining very large time series collections with isax2+,'' \emph{KAIS}, 2014.

\bibitem{malkov2018efficient}
Y.~A. Malkov and D.~A. Yashunin, ``Efficient and robust approximate nearest neighbor search using hierarchical navigable small world graphs,'' \emph{IEEE TPAMI}, 2018.

\bibitem{ge2013optimized}
T.~Ge, K.~He, Q.~Ke, and J.~Sun, ``Optimized product quantization,'' \emph{IEEE transactions on pattern analysis and machine intelligence}, 2013.

\bibitem{amd}
{Advanced Micro Devices, Inc}, ``{4th Gen AMD EPYC™ Processor Architecture},'' 2024.

\bibitem{li2019approximate}
W.~Li, Y.~Zhang, Y.~Sun, W.~Wang, M.~Li, W.~Zhang, and X.~Lin, ``Approximate nearest neighbor search on high dimensional data—experiments, analyses, and improvement,'' \emph{IEEE TKDE}, 2019.

\bibitem{fu2024securing}
Y.~Fu, X.~Lee, J.~Wei, Y.~Li, and B.~Peng, ``Securing the internet’s backbone: A blockchain-based and incentive-driven architecture for dns cache poisoning defense,'' \emph{Computer Networks}, 2024.

\bibitem{fu2023ti}
Y.~Fu, J.~Wei, Y.~Li, B.~Peng, and X.~Li, ``Ti-dns: A trusted and incentive dns resolution architecture based on blockchain,'' in \emph{TrustCom}, 2023.

\bibitem{fei2023flexauth}
Z.~Fei, Y.~Li, J.~Wei, Y.~Fu, B.~Peng, and X.~Li, ``Flexauth: A decentralized authorization system with flexible delegation,'' in \emph{TrustCom}, 2023.

\bibitem{zhang2025polaris}
A.~Zhang, X.~Lee, Z.~Zhuang, J.~Wei, Y.~Fu, and B.~Peng, ``Polaris: Cross-domain access control via verifiable identity and policy-based authorization,'' \emph{arXiv preprint arXiv:2511.22017}, 2025.

\bibitem{zhuang2026structured}
Z.~Zhuang, X.~Lee, A.~Zhang, J.~Wei, Y.~Fu, and B.~Peng, ``Structured policy modeling and context-aware generation for multi-jurisdictional compliance in global software systems,'' \emph{IST}, p. 108041, 2026.

\bibitem{li2024disauth}
Y.~Li, J.~Wei, Z.~Fei, Y.~Fu, and X.~Lee, ``Disauth: A dns-based secure authorization framework for protecting data decoupled from applications,'' \emph{Computer Networks}, 2024.

\bibitem{zhuang2024cbcms}
Z.~Zhuang, X.~Lee, J.~Wei, Y.~Fu, and A.~Zhang, ``Cbcms: a compliance management system for cross-border data transfer,'' in \emph{BigData}, 2024.

\bibitem{wei2026virtuouscycleaipoweredvector}
J.~Wei, Q.~Xu, and C.~Yang, ``{The Virtuous Cycle: AI-Powered Vector Search and Vector Search-Augmented AI},'' in \emph{ICDE}, 2026.

\bibitem{song2026csattention}
C.~Song, Z.~Peng, J.~Wei, and C.~Yang, ``{CSA}ttention: Centroid-scoring attention for accelerating {LLM} inference,'' 2026.

\bibitem{tacosigmod26}
J.~Wei, Z.~Liao, R.~Han, Q.~Xu, C.~Yang, and T.~Palpanas, ``{TaCo: Data-adaptive and Query-aware Subspace Collision for High-dimensional Approximate Nearest Neighbor Search},'' \emph{SIGMOD}, 2026.

\end{thebibliography}

\begin{IEEEbiography}[{\includegraphics[width=1in,height=1.25in,clip,keepaspectratio]{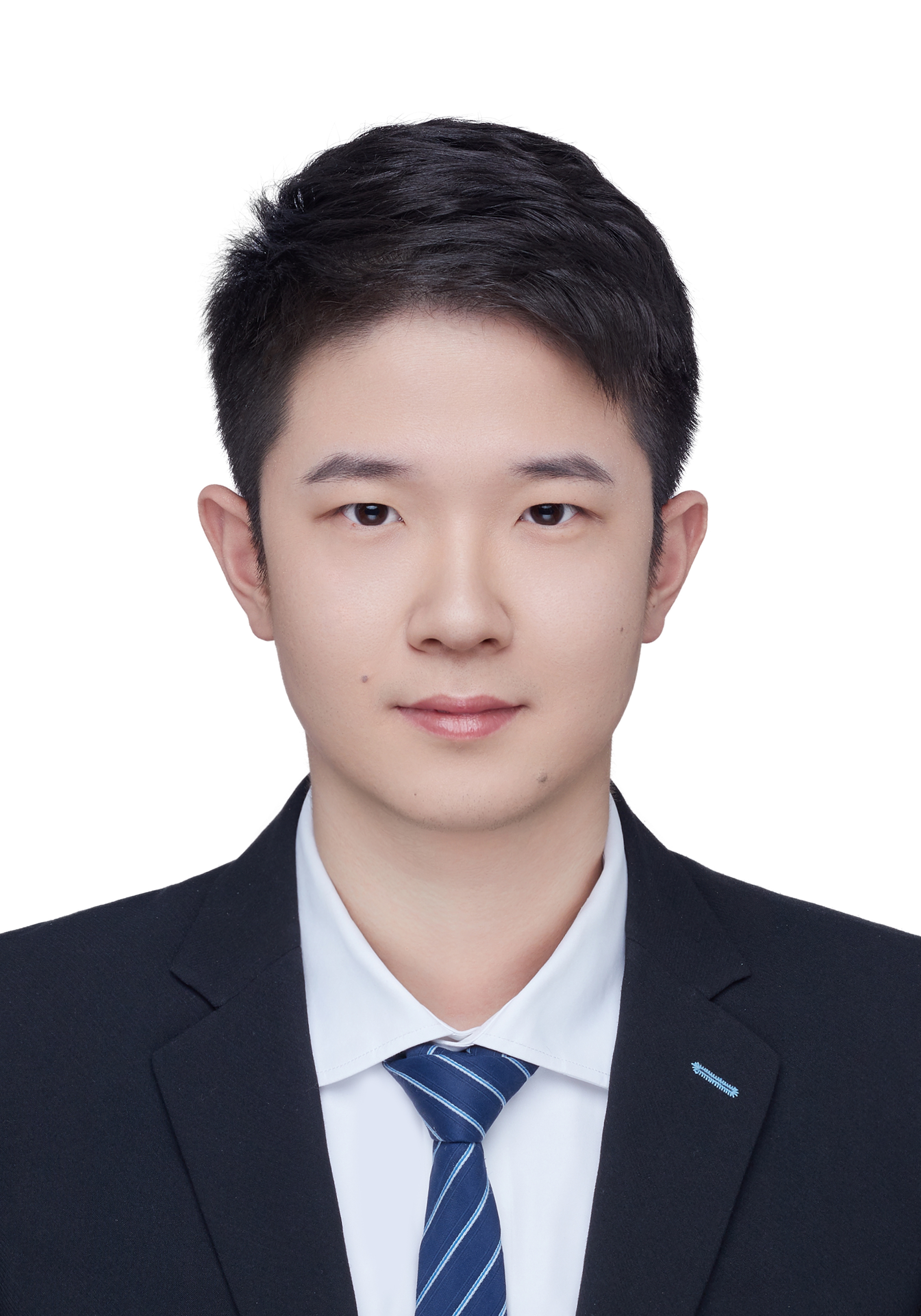}}]{Jiuqi Wei}
received his B.S. degree in software engineering from Nankai University, China, in 2019, and received his Ph.D. degree at Institute of Computing Technology, Chinese Academy of Sciences, China, in 2025. 
He is currently a researcher of Oceanbase Lab, Ant Group.
His research interests include vector database, information retrieval, LLM inference acceleration, and data management.
His research results have been published in top-tier conferences and journals such as SIGMOD, VLDB, and ICDE.
\end{IEEEbiography}

\begin{IEEEbiography}[{\includegraphics[width=1in,height=1.25in,clip,keepaspectratio]{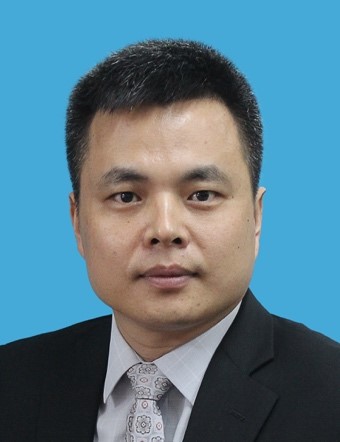}}]{Xiaodong Lee}
received his Ph.D. degree in computer science from the Institute of Computing Technology, Chinese Academy of Sciences, in 2004. He is currently a Professor at the Institute of Computing Technology, Chinese Academy of Sciences, and the Center for Internet Governance Director, Tsinghua University. 
He is also Vice Chairman of the Internet Society of China. His research interests include Internet fundamental resources management, data governance, and Internet infrastructure.
\end{IEEEbiography}

\begin{IEEEbiography}[{\includegraphics[width=1in,height=1.25in,clip,keepaspectratio]{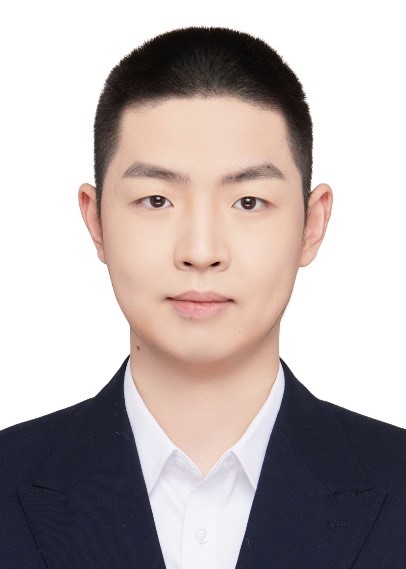}}]{Botao Peng}
received his Ph.D. degree at the Department of Computer Science, University of the Paris, in 2020, under the supervision of Themis Palpanas.
He is currently an Associate Professor at the Institute of Computing Technology, Chinese Academy of Sciences.
His research focuses on high-dimensional vector management, data analysis, and Internet infrastructure.
His research results have been published in top-tier conferences and journals such as SIGMOD, VLDB, ICDE, TKDE, and VLDBJ.
\end{IEEEbiography}

\begin{IEEEbiography}[{\includegraphics[width=1in,height=1.25in,clip,keepaspectratio]{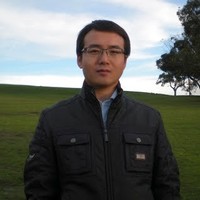}}]{Quanqing Xu}
received his Ph.D. degree in computer science from the School of Electronics Engineering and Computer Science, Peking University (PKU), in 2009.
He is currently the technical director of Oceanbase Lab, Ant Group.
His research interests primarily include distributed database systems and storage systems.
He is a Fellow of the IET, a distinguished member of CCF, a senior member of ACM and IEEE.
\end{IEEEbiography}

\begin{IEEEbiography}[{\includegraphics[width=1in,height=1.25in,clip,keepaspectratio]{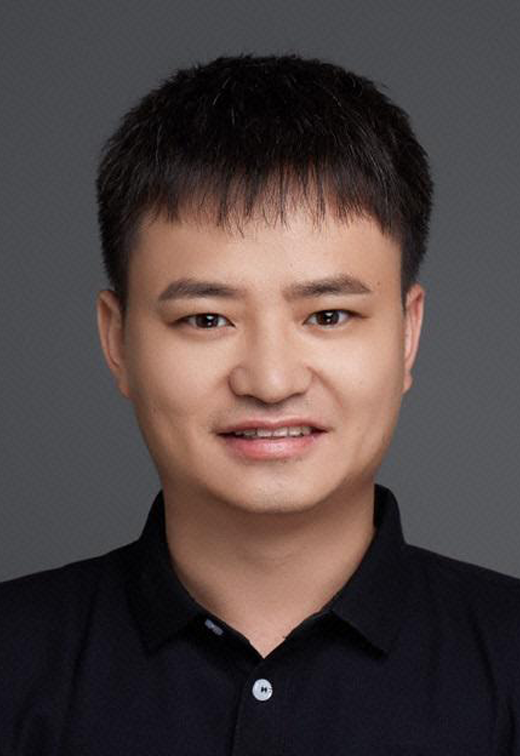}}]{Chuanhui Yang}
is the CTO of OceanBase, Ant Group.
His research focuses on database, distributed
systems, and AI infra.
As one of the founding members, he led the previous architecture design, and technology research and development of OceanBase, realizing the full implementation of OceanBase in Ant
Group from scratch.
He also led two OceanBase TPC-C tests and broke the world record, and authored the monograph "Large-Scale Distributed Storage Systems: Principles and Practice".
\end{IEEEbiography}

\begin{IEEEbiography}[{\includegraphics[width=1in,height=1.25in,clip,keepaspectratio]{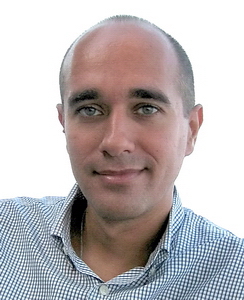}}]{Themis Palpanas}
is an ACM Fellow, a Senior Member of the French University Insitute (IUF), and a Distinguished Professor of computer science at Université Paris Cité. 
He has authored 14 patents, received 3 best paper awards and the IBM SUR award, has been Program Chair for VLDB 2025 and IEEE BigData 2023, General Chair for VLDB 2013, and has served Editor in Chief for BDR. He has been working in the fields of Data Series Management and Analytics for more than 15 years, and has developed several of the state of the art techniques.
\end{IEEEbiography}

\end{document}